\documentclass[11pt,a4paper]{article}

\usepackage[T1]{fontenc}
\usepackage{newtxtext}
\usepackage[margin=1in]{geometry}
\usepackage{graphicx,booktabs,array,tabularx}
\usepackage{amsmath,amsthm,mathtools}
\usepackage{newtxmath}
\usepackage{enumitem,float,caption,microtype,xcolor}
\usepackage[round,authoryear]{natbib}
\usepackage[colorlinks=true,linkcolor=blue!45!black,
  citecolor=blue!45!black,urlcolor=blue!45!black]{hyperref}
\hypersetup{
  pdftitle={How Likely and How Deep? Sharp Joint Bounds on Risk-Neutral Crash Probability and Conditional Depth},
  pdfauthor={Jirong Zhuang},
  pdfsubject={Model-free option-implied crash probability and conditional depth},
  pdfkeywords={option-implied tail risk, partial identification, model-free bounds}
}

\graphicspath{{./}}
\newtheorem{theorem}{Theorem}[section]
\newtheorem{proposition}[theorem]{Proposition}
\newtheorem{lemma}[theorem]{Lemma}
\newtheorem{corollary}[theorem]{Corollary}

\newtheorem{assumption}[theorem]{Assumption}
\theoremstyle{remark}
\newtheorem{remark}[theorem]{Remark}

\newcommand{\Q}{\mathbb{Q}}
\newcommand{\R}{\mathbb{R}}
\newcommand{\E}{\mathbb{E}}
\newcommand{\one}{\mathbf{1}}
\newcommand{\cM}{\mathcal{M}}
\newcommand{\cC}{\mathcal{C}}
\newcommand{\cB}{\mathcal{B}}
\newcommand{\cR}{\mathcal{R}}
\newcommand{\cT}{\mathcal{T}}
\newcommand{\cY}{\mathcal{Y}}
\newcommand{\SampleWeeks}{557}
\newcommand{\MainCases}{2,228}
\newcommand{\HardShare}{93.90\%}
\newcommand{\InteriorQualifiedCases}{2,076}
\newcommand{\QualifiedCases}{1,966}
\newcommand{\PrimaryAreaRatio}{0.6340}

\title{\textbf{How Likely and How Deep?}\\
\large Sharp Joint Bounds on Risk-Neutral Crash Probability and Conditional
Depth from Option Bid--Ask Quotes}
\author{Jirong Zhuang\\
\normalsize Department of Mathematics, University of Macau\\
\normalsize Avenida da Universidade Taipa, Macau 999078, China\\
\normalsize\texttt{yc27478@um.edu.mo}}

\begin{document}
\maketitle

\begin{abstract}
Option quotes with bid--ask spreads do not point-identify the risk-neutral
probability of a crash below a given threshold, nor the expected depth of
the crash once the threshold is breached.  Bounds computed separately for
the two quantities can mislead, because their endpoints may be attained by
different risk-neutral distributions.  We characterize the closure of the
set of jointly attainable probability and loss pairs and call its boundary
the crash frontier.  Partitioning the state space at observed strikes and
at the threshold, with one added coordinate for mass at the threshold
itself, makes this set the projection of a finite linear system.  The
projection is exact, and no discretization of the state space is required.
Support programs trace the frontier and give sharp upper and lower values
for any portfolio of digital and put payoffs.  Each bound comes with a dual
certificate, a static portfolio of cash, forward, and retained quotes that
super-replicates the target payoff.  For a candidate crash probability, the
set reports the range of conditional depths consistent with it.  In SPX option
cross sections sampled weekly from 2013 to 2023, the joint region covers a median 63\% of the
benchmark formed from separate marginal bounds.  About a third of the
scenarios allowed by the marginal bounds are therefore jointly infeasible.
Quotes far from the threshold contract the joint region by 5.4--18.2\% at
the median relative to a local set of eight strikes.
\end{abstract}

\medskip
\noindent\textbf{Keywords:} option-implied tail risk, partial identification,
model-free bounds, risk-neutral probability, conditional depth, linear
programming

\smallskip
\noindent\textbf{JEL Classification:} G13, C14, C61

\section{Introduction}

How likely is a crash, and how deep will it be if it arrives?  Index option
prices bear on both questions, because an out-of-the-money put is directly
exposed to losses below its strike.  Under a risk-neutral measure $\Q$, for
a terminal index level $X$ normalized by its forward and a threshold $K<1$,
define
\[
  p_K=\Q(X\leq K),\qquad
  L_K=\E_{\Q}(K-X)^+,\qquad
  s_K=\frac{L_K}{p_K}
      =K-\E_{\Q}[X\mid X\leq K],
  \qquad p_K>0.
\]
The loss $L_K$ is the normalized put value and hence an unconditional
shortfall.  The depth $s_K$ is the average shortfall below $K$ given that
the threshold is breached.

With finitely many strikes and nonzero bid--ask spreads, neither $p_K$ nor
$s_K$ is point-identified, and the harder problem is joint.  Since
$L_K=p_Ks_K$, a frequent shallow breach and a rare deep one can carry the
same put value.  A normalized put value of $0.10\%$ of forward can reflect
a $2\%$ crash probability with an average shortfall of $5\%$ below the
threshold, or a $0.5\%$ probability with a $20\%$ average shortfall.
Sharp intervals computed separately for probability and depth discard this
trade-off.  The lower probability bound and the upper depth bound may be
attained by different distributions, so quoting the pair together can
describe a scenario that the option cross section itself rules out.

We therefore take the closed joint probability--loss region as the
primitive object:
\[
  \cR_K
  =
  \overline{\left\{
    \left(\Q(X\leq K),\E_\Q(K-X)^+\right):
    \Q\in\cM
  \right\}},
\]
where $\cM$ collects the nonnegative, unit-mean distributions consistent
with all retained quote intervals.  We call the boundary of this region
the crash frontier.  Conditional depth is read off
afterward: each pair $(p,L)$ with $p>0$ maps to the depth pair $(p,L/p)$.
The region $\cR_K$ is convex while its depth image need not be, and the
map is exact once the boundary of $\cR_K$ is known, so we identify
$\cR_K$ first.

The joint set answers questions that marginal intervals cannot.  A risk
committee can ask whether a candidate scenario, a crash probability paired
with a conditional depth, is consistent with any distribution that prices
today's quotes.  The vertical section of the set at that probability gives
the answer.  A desk holding both the digital payoff $\one\{X\leq K\}$ and
the put payoff $(K-X)^+$ can ask for sharp bounds on the package, and the
support function returns them.  Comparing nested strike sets shows whether
quotes far from $K$ narrow the surviving tail scenarios.  A dual solution
backs every boundary value with a static portfolio whose cost equals the
bound and whose payoff dominates the target.  None of these calculations
requires a smooth density or a parametric tail.

The technical difficulty is a mismatch in regularity: vanilla payoffs are
continuous, whereas the event indicator jumps at $K$.  Once observed
strikes and $K$ partition the state space, cell probabilities and first
moments price every retained option, yet these moments cannot distinguish
an atom at $K$ from mass just above it.  One added coordinate records the
atom explicitly.  Under a strict interior condition on the quote
constraints, the resulting finite linear system projects exactly onto
$\cR_K$.  No discretization of the state space is required.  An adaptive
support-function
algorithm reconstructs the polygon, and the conditional depth frontier and
its area follow analytically.

We apply the method to SPX quotes sampled weekly from January 2013 through August
2023, comparing nested quote sets that range from a few strikes near the
threshold to the complete put wing.  In the primary sample, the joint set
fills a median \PrimaryAreaRatio{} of the benchmark built from separate
marginal bounds and the universal restriction $0\leq L\leq Kp$, so about a
third of the benchmark region is ruled out once both coordinates must come
from one distribution.  Relative to a local set of eight strikes, the
complete put wing reduces the median area of the joint set, measured in
probability and depth coordinates, by 5.4--18.2\%.  The gain reflects
consistency across the whole put wing rather than fit near the threshold.
A price-grid benchmark shows the practical role of the threshold
coordinate: median grid errors vanish under refinement, but cases with an
atom at the threshold keep a probability error equal to the misassigned
mass.

The methods nearest to ours either discretize the state space or select a
single distribution.  \citet{barratt2024convex} bound tail functionals over
risk-neutral probabilities on a fixed grid, and \citet{devries2026moments}
select one pricing-consistent estimator.  Our target is the exact joint
region under the original quote intervals, with the threshold treated
without discretization.  Section~\ref{sec:lit} develops these comparisons.

The paper proceeds as follows.  Section~\ref{sec:lit} reviews related work.
Sections~\ref{sec:setup}--\ref{sec:algorithm} define the identified object,
develop the finite representation, and give the reconstruction algorithm.
Sections~\ref{sec:data}--\ref{sec:benchmark} present the data, the
identification gains, and the price-grid benchmark.
Section~\ref{sec:conclusion} concludes.  Proofs and computational details
are in the appendix.

\section{Related literature}
\label{sec:lit}

\citet{merton1973theory} develops general no-arbitrage restrictions on
option prices and extends the Black--Scholes framework, and \citet{breeden1978prices}
link the curvature of call prices across strikes to state prices.  In this
tradition, nonparametric recovery selects one distribution or one
arbitrage-free price surface by optimization or smoothing
\citep{jackwerth1996recovering,aitsahalia1998nonparametric,fengler2009smoothing}.
Bid--ask information enters the same way.  \citet{monnier2013spectral}
selects the smoothest density whose put prices lie inside the quoted
intervals, \citet{wizman2026arbitrage} combine arbitrage filtering with a
smoothness and entropy criterion for short-dated quotes, and
\citet{qin2026marginals}, starting from discrete arbitrage-free prices,
construct a marginal that reprices every input exactly.  We instead retain
every distribution compatible with the bid--ask intervals.  A selection criterion
resolves observational equivalence.  Our purpose is to measure it.

The mathematical foundation is the model-free bounds literature.
\citet{bertsimas2002relation} treat sharp option-price bounds under moment
information as generalized moment problems, \citet{king2005calibrated}
develop calibrated option bounds, and \citet{davis2007range} and
\citet{cousot2007conditions} characterize when finitely many prices are
consistent with an arbitrage-free model.  \citet{bertsimas2006option}
reduce bounds for continuous piecewise-linear payoffs to linear programs,
\citet{bassett1997nonparametric} derives probability bounds from finitely
many option prices, and \citet{cohen2020detecting} repair option quote data by
sparse linear programming.  Discontinuous payoffs behave differently.
\citet{cox2011double} document nonattainment and sensitivity to boundary
mass under finite strike information, and \citet{liang2026bidask} show that
digital bounds respond to vanishing bid--ask spreads at a square-root rate.
Our threshold coordinate carries this boundary behavior into the
finite-dimensional joint projection.

The closest computational work is \citet{barratt2024convex}, who bound the
CDF, conditional probabilities, VaR, and CVaR over risk-neutral
probabilities on a discretized price grid.  \citet{devries2026moments}
construct projection estimators for general payoffs, obtain a
pricing-consistent CDF from indicator payoffs, and prove a finite-sample
approximation bound.  The method returns one estimator under a chosen
projection weight, whereas our target is the full feasible set under the
observed intervals.  \citet{vaneekelen2023conditional} derives sharp
generalized moment bounds for conditional expectations.  Our object
identifies the conditioning probability jointly with its shortfall
numerator.  \citet{neufeld2023model} compute $\varepsilon$-optimal bounds
for continuous piecewise-affine payoffs, a class that excludes digitals
without prior approximation.  On the computational side, enumeration of
projected polyhedra by repeated support programs is longstanding in
multiobjective optimization \citep{ehrgott2012dual}, and support-function
representations of convex identified sets are established in partial
identification \citep{beresteanu2008asymptotic,lohne2016equivalence}.
What is new here is the target: a vanilla representation with explicit
threshold mass whose projection is the entire sharp joint region, mapped
analytically into conditional depth.

A related empirical literature extracts disaster and tail information from
index options \citep{backus2011disasters,bollerslev2011tails}.
\citet{vilkov2013option} fit parametric tails motivated by extreme value
theory, and \citet{osullivan2018tail} estimate risk-neutral tail
probabilities and tail expectations beyond the observed strike range.
Bounds on physical crash probabilities require restrictions linking
physical and risk-neutral measures.  \citet{martin2026forecasting} impose a
power utility stochastic discount factor and bound over dependence
structures consistent with option-implied marginals.  Our analysis stays
entirely under $\Q$.

Finally, the fixed-threshold depth $s_K$ differs from quantile-based tail
measures.  When $K$ is the relevant quantile, the total normalized loss
$(1-K)+s_K=1-\E_\Q[X\mid X\leq K]$ corresponds to CVaR of the loss $1-X$.
With mass at the quantile, the tail-average convention splits the boundary
atom \citep{rockafellar2000optimization,rockafellar2002conditional}.  Our
object fixes $K$ and retains the full joint set implied by the quote
intervals.

\section{Market setup and the joint tail object}
\label{sec:setup}

\subsection{Normalization and quote intervals}

Fix a date and maturity $T$.  Let $S_T$ denote the terminal SPX level,
$D_{0,T}$ the discount factor, and $F_{0,T}$ the forward.  Define
\[
  X=\frac{S_T}{F_{0,T}}, \qquad X\geq0.
\]
Under a risk-neutral measure with the correct forward,
\[
  \E_\Q[X]=1.
\]
For absolute strike $K^{\mathrm{abs}}$ and normalized strike
$k=K^{\mathrm{abs}}/F_{0,T}$, let $C(K^{\mathrm{abs}},T)$ denote the time-0
call price.  Its normalized value is
\[
  c_\Q(k)=\E_\Q[(X-k)^+]
         =\frac{C(K^{\mathrm{abs}},T)}{D_{0,T}F_{0,T}}.
\]
Let $\mathcal I$ denote the retained option quotes.  Quote $i\in\mathcal I$
supplies the interval
\[
  \underline c_i
  \leq c_\Q(k_i)
  \leq \overline c_i.
\]
Let $\underline P_i$ and $\overline P_i$ denote the put bid and ask prices.
Put quotes are converted by parity:
\[
  [\underline c_i,\overline c_i]
  =
  \left[
  \frac{\underline P_i}{D_{0,T}F_{0,T}}+1-k_i,\,
  \frac{\overline P_i}{D_{0,T}F_{0,T}}+1-k_i
  \right].
\]
We impose no midpoint, and when several retained quotes share a strike,
each interval remains a separate constraint.

Let $\mathcal P(\R_+)$ denote the probability measures on $\R_+$.  For this
quote set, define the quote-compatible measures by
\[
  \cM(\mathcal I)=
  \left\{
    \Q\in\mathcal P(\R_+)
    \ \middle|\
    \begin{aligned}
      &\E_\Q[X]=1,\\
      &\underline c_i
      \leq\E_\Q(X-k_i)^+
      \leq\overline c_i,
      \qquad i\in\mathcal I
    \end{aligned}
  \right\}.
\]
When these restrictions are jointly feasible, we call the cross section feasible,
and $\cM(\mathcal I)$ is the identified set of distributions under the
original intervals.  Inconsistent cross sections are handled in
Section~\ref{sec:relax}.

\subsection{Frequency, loss, and conditional depth}

Fix $K\in(0,1)$.  For $\Q\in\cM(\mathcal I)$, define
\begin{align}
  p_K(\Q)&=\E_\Q[\one\{X\leq K\}], \label{eq:p}\\
  L_K(\Q)&=\E_\Q[(K-X)^+], \label{eq:L}\\
  s_K(\Q)&=\frac{L_K(\Q)}{p_K(\Q)},\qquad p_K(\Q)>0. \label{eq:s}
\end{align}
The identity
\[
  s_K(\Q)=K-\E_\Q[X\mid X\leq K]
\]
shows that $s_K\in[0,K]$.  It is the average forward-normalized shortfall
below $K$, conditional on the threshold event.  The total conditional decline from the
forward is $(1-K)+s_K$.  The depth coordinate $s_K$ isolates the additional
shortfall beyond the chosen threshold.

The attainable image, its closed identified region, and the
conditional-depth image are
\begin{align}
  \cR_K^{\mathrm{att}}(\mathcal I)
  &=
  \{(p_K(\Q),L_K(\Q)):\Q\in\cM(\mathcal I)\},\\
  \cR_K(\mathcal I)
  &=\overline{\cR_K^{\mathrm{att}}(\mathcal I)},\\
  \cT_K(\mathcal I)
  &=
  \{(p,L/p):(p,L)\in\cR_K(\mathcal I),\ p>0\}.
\end{align}
We call the boundary of $\cR_K(\mathcal I)$ the crash frontier of the
cross section.

Every pair satisfies the universal inequalities $0\leq L_K\leq Kp_K$, a
wedge in the $(p,L)$ plane.  The joint set is typically a proper subset of
the product of its marginal intervals intersected with this wedge, because
both coordinates must come from one quote-compatible measure.
Section~\ref{sec:data} turns
this comparison into the marginal benchmark used in the empirical analysis.

\section{A finite representation with explicit threshold mass}
\label{sec:finite}

The finite representation rests on three ingredients.  Cell masses and
first moments price every retained vanilla payoff, one additional
coordinate assigns threshold mass to the event, and the quote intervals
select the admissible moment vectors.  Theorem~\ref{thm:sharp} shows that these
ingredients recover the joint object exactly.

\subsection{Cell probabilities and first moments}

Let
\[
  0=b_0<b_1<\cdots<b_J
\]
be the sorted union of all retained normalized strikes and the threshold $K$.
The bounded cells are $I_j=[b_j,b_{j+1})$, $j=0,\ldots,J-1$, and
$I_J=[b_J,\infty)$.  Define
\[
  z_j=\Q(X\in I_j),\qquad
  m_j=\E_\Q[X\one\{X\in I_j\}].
\]
Collect these cell-level scalars as
\[
  z=(z_0,\ldots,z_J)^\top\in\R_+^{J+1},\qquad
  m=(m_0,\ldots,m_J)^\top\in\R_+^{J+1}.
\]
They satisfy
\begin{align}
  &z_j\geq0,\quad m_j\geq0,\quad
  \sum_{j=0}^{J}z_j=1,\quad
  \sum_{j=0}^{J}m_j=1, \label{eq:massmoment}\\
  &b_jz_j\leq m_j\leq b_{j+1}z_j,
    &&j=0,\ldots,J-1, \label{eq:boundedcell}\\
  &m_J\geq b_Jz_J. \label{eq:tailcell}
\end{align}
For an observed strike $k=b_\ell$,
\begin{equation}
  c_\Q(k)
  =
  \sum_{j=\ell}^{J}(m_j-kz_j).
  \label{eq:calllinear}
\end{equation}
Denote the corresponding linear form for quote $i$ by $c_i(z,m)$.
Thus every bid--ask restriction is linear in $(z,m)$.

Every retained vanilla payoff is affine on each cell, so its expectation
depends only on the cell masses and first moments.  Lemma~\ref{lem:cell} in
Appendix~\ref{app:proofs} proves the converse: every vector satisfying
\eqref{eq:massmoment}--\eqref{eq:tailcell} is generated by unit-mean measures
in the closure.  Thus \eqref{eq:calllinear} is exact without discretizing
the state space.

The closure qualification concerns shared cell endpoints and the boundary
associated with the unbounded tail.  When the nonzero conditional cell means
lie in the relative interiors of their cells, placing each cell mass at
$m_j/z_j$ gives an attaining measure with finite support.

\begin{remark}[Order-one truncated moment problem]
\label{rem:momentproblem}
Conditions \eqref{eq:massmoment}--\eqref{eq:tailcell} are the feasibility
conditions of the truncated moment problem of order one on each cell
\citep{kreinnudelman1977}.  Piecewise-affine payoffs price through cell
masses and first moments alone, so linear constraints suffice.  Targets
built from higher moments of the state require higher cell moments and
semidefinite conditions.
\end{remark}

\subsection{The threshold atom}

Let $j_K$ be the index with $K=b_{j_K}$, and introduce
\[
  a_K=\Q(X=K).
\]
In the finite system, $a_K$ is the coordinate that allocates threshold
mass to the event.  It satisfies
\begin{align}
  0&\leq a_K\leq z_{j_K}. \label{eq:atom1}
\end{align}
The lower-bound restriction on the residual moment,
$m_{j_K}-Ka_K\geq K(z_{j_K}-a_K)$, is exactly
$m_{j_K}\geq Kz_{j_K}$, already imposed by
\eqref{eq:boundedcell} or \eqref{eq:tailcell}.
When $j_K<J$, the finite upper boundary of the threshold cell also gives
\begin{align}
  m_{j_K}&\leq b_{j_K+1}z_{j_K}-(b_{j_K+1}-K)a_K.
  \label{eq:atom2}
\end{align}
When $j_K=J$, \eqref{eq:atom2} is omitted, and the terminal-cell
inequality $m_J\geq Kz_J$ suffices.  Together these
restrictions separate mass at $K$ from residual mass to its right.  The tail
coordinates are then
\begin{align}
  p_K&=\sum_{j<j_K}z_j+a_K, \label{eq:ptarget}\\
  L_K&=\sum_{j<j_K}(Kz_j-m_j). \label{eq:Ltarget}
\end{align}
The atom counts toward the frequency yet adds nothing to the loss, exactly
as the definitions of $p_K$ and $L_K$ require.

Let $\mathcal F_K(\mathcal I)$ denote the set of vectors $(z,m,a_K)$ that
satisfy \eqref{eq:massmoment}--\eqref{eq:tailcell},
\eqref{eq:atom1}--\eqref{eq:atom2} (with \eqref{eq:atom2} omitted when
$j_K=J$), and every quote interval through \eqref{eq:calllinear}.  Define the
linear projection
\[
  G_K(z,m,a_K)
  =
  \left(
    \sum_{j<j_K}z_j+a_K,\,
    \sum_{j<j_K}(Kz_j-m_j)
  \right).
\]

\begin{assumption}[Strict quote interior (a Slater condition)]
\label{ass:interior}
Every retained quote interval has positive width, and the finite
set $\mathcal F_K(\mathcal I)$ contains a vector whose modeled prices lie
strictly inside their respective intervals.
\end{assumption}

This Slater-type constraint qualification \citep{rockafellar1970convex}
ensures that closure of the attainable payoff vectors commutes with
intersection with the closed quote intervals.  It is checked
by one auxiliary linear program that maximizes the smallest relative margin
of the modeled prices to the quote endpoints: given positive interval
widths, Assumption~\ref{ass:interior} holds exactly when the optimal margin
is positive (Appendix~\ref{app:interior}).  A strictly interior reference
set automatically qualifies each nested quote subset.

\begin{theorem}[Sharp projection with explicit threshold mass]
\label{thm:sharp}
Under Assumption~\ref{ass:interior},
\[
  \cR_K(\mathcal I)
  =
  G_K\!\left(\mathcal F_K(\mathcal I)\right).
\]
\end{theorem}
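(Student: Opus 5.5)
We prove the two inclusions separately, and we use Assumption~\ref{ass:interior} only for the inclusion $G_K(\mathcal F_K)\subseteq\cR_K$.

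\textbf{Inclusion $\cR_K\subseteq G_K(\mathcal F_K)$.} Take $\Q\in\cM(\mathcal I)$ and set $z_j,m_j$ as in the cell definitions, together with $a_K=\Q(X=K)$.
\begin{itemize}[nosep]
\item Conditions \eqref{eq:massmoment}--\eqref{eq:tailcell} hold directly.
\item The atom restrictions \eqref{eq:atom1}--\eqref{eq:atom2} hold because the residual mass $z_{j_K}-a_K$ sits in $(K,b_{j_K+1})$.
\item The quote intervals hold by \eqref{eq:calllinear}.
\item Since $\{X\le K\}$ is the union of the cells $I_j$, $j<j_K$, and the atom, and $(K-X)^+$ vanishes on $I_{j_K}$, we get $G_K(z,m,a_K)=(p_K(\Q),L_K(\Q))$.
\end{itemize}
So $\cR_K^{\mathrm{att}}\subseteq G_K(\mathcal F_K)$. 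The set $\mathcal F_K$ is a polyhedron, and any linear image of a polyhedron is a polyhedron (Minkowski--Weyl), hence closed. Taking closures gives the inclusion.

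\textbf{Inclusion $G_K(\mathcal F_K)\subseteq\cR_K$, interior case.} First take $v=(z,m,a_K)\in\mathcal F_K$ whose modeled prices are strictly inside the quote intervals. Let $\mathcal F_K^0$ be the polyhedron defined by the cell and atom constraints alone, without the quotes. We refine Lemma~\ref{lem:cell} to the threshold cell and build measures $\Q_n$ as follows.
\begin{itemize}[nosep]
\item Put an atom of mass $a_K$ at $K$.
\item Spread the residual threshold-cell mass $z_{j_K}-a_K$, with mean $(m_{j_K}-Ka_K)/(z_{j_K}-a_K)$, inside $(K,b_{j_K+1})$, or inside $(K,\infty)$ when $j_K=J$. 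This is possible when that mean lies in the open cell.
\item Place the other cells' masses at their conditional means $m_j/z_j$ when these are interior.
\end{itemize}
In boundary cases a conditional mean sits at a cell endpoint or at the unbounded tail. There we perturb to nearby points inside the cell, or use a small mass escaping to infinity, keeping $\E[X]=1$ by rebalancing.

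The main obstacle is to ensure that $p_K(\Q_n)\to p$ and $L_K(\Q_n)\to L$ and that all vanilla prices converge. The digital is discontinuous, so a perturbation must never push mass across $K$ in the wrong direction. Mass in cells below $K$ stays strictly below $K$. Only the atom sits at $K$. Residual threshold mass stays strictly above $K$, at a point converging to $K$ when the residual mean equals $K$. Then $p_K(\Q_n)$ is exactly $\sum_{j<j_K}z_j+a_K$, and the loss and the calls are continuous in these perturbations.

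\textbf{Inclusion $G_K(\mathcal F_K)\subseteq\cR_K$, general case.} The approximating $\Q_n$ may violate the quote intervals slightly. We handle this with the Slater point. Let $v^\circ\in\mathcal F_K$ be the point from Assumption~\ref{ass:interior}, and for arbitrary $v\in\mathcal F_K$ set
\[
v_\lambda=(1-\lambda)v+\lambda v^\circ,\qquad \lambda\in(0,1].
\]
Each $v_\lambda$ prices strictly inside every interval, because the intervals are convex and $v^\circ$ is strictly interior. By the interior case, $G_K(v_\lambda)$ is a limit of pairs generated by measures $\Q_n$ whose prices converge to those of $v_\lambda$. Those prices lie strictly inside the intervals, so $\Q_n\in\cM(\mathcal I)$ for large $n$, and hence $G_K(v_\lambda)\in\cR_K$. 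Letting $\lambda\downarrow0$ and using linearity of $G_K$ with closedness of $\cR_K$ gives $G_K(v)\in\cR_K$.

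This step is exactly the commutation of closure with intersection that Assumption~\ref{ass:interior} provides. Without it, $v$ could sit on a quote endpoint that is reachable only in the limit.
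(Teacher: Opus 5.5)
Your proposal is correct and follows essentially the same route as the paper. It has the same three steps: the forward inclusion via $a_K=\Q(X=K)$ and closedness of the polyhedral image; a cellwise realization with the atom fixed at $K$ and the residual threshold mass approached from the right; and the Slater combination $(1-\theta)y+\theta y^\circ$ followed by a diagonal limit. The one step you leave vague, the exact mass-and-mean ``rebalancing,'' is handled in the paper by mixing in a unit-mass, unit-mean buffer on $\{0,x_{\mathrm{buf}}\}$ and absorbing the $O(|A_n|+|B_n|)$ correction into those buffer atoms, which keeps the measure nonnegative and leaves the atom at $K$ in place; the rescaling normalization of Lemma~\ref{lem:cell} would not work here, because it moves that atom off $K$.
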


Every sharp linear bound on $(p_K,L_K)$ is therefore the value of a
finite-dimensional linear program.

\begin{remark}[Extremal scenarios are simple]
\label{rem:parsimony}
The measures constructed in the proof of Theorem~\ref{thm:sharp} are
discrete, with at most one atom per cell, one atom at the threshold, and
two auxiliary atoms.  Every point of $\cR_K(\mathcal I)$ is therefore
generated, or approached, by distributions with at most $J+4$ support
points.  The scenarios that trace the crash frontier are small collections
of atoms.
\end{remark}

\subsection{Why a price grid can fail at the threshold}

Without $a_K$, moving mass from $K-\varepsilon$ to $K+\varepsilon$ changes
vanilla prices continuously but changes $p_K$ discontinuously.  A
price-grid method, which replaces $\Q$ by probabilities on fixed nodes,
inherits this problem.  It can match every continuous payoff closely and
still assign threshold mass to the wrong side of the event.

\begin{proposition}[Threshold nonuniformity]
\label{prop:grid}
Fix a mass parameter $0<\alpha<1$.  For every sufficiently small
$\varepsilon>0$, there exist pairs of distributions whose call-price vectors
at any finite collection of strikes differ by $O(\varepsilon)$, while their
values of $\Q(X\leq K)$ differ by $\alpha$, obtained by placing mass $\alpha$
at $K-\varepsilon$ versus $K+\varepsilon$.  The coordinatewise interval hull
of the two call vectors has width $O(\varepsilon)$, while its compatible
$p_K$ values remain $\alpha$ apart.  Thus convergence of vanilla option
prices alone does not imply convergence of sharp CDF bounds at an atomic
threshold.
\end{proposition}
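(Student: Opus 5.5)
The plan is an explicit construction of two measures that differ only in where a single atom of mass $\alpha$ sits, followed by a direct comparison of call prices and threshold probabilities. Fix $0<\alpha<1$, and let the finite strike collection $k_1,\dots,k_n\ge 0$ be arbitrary. To keep both measures unit-mean, choose one auxiliary point $y>1$ with remaining mass $1-\alpha$, and set
\[
  \Q_\varepsilon^{\pm}=\alpha\,\delta_{K\pm\varepsilon}+(1-\alpha)\,\delta_{y_\pm},
  \qquad
  y_\pm=\frac{1-\alpha(K\pm\varepsilon)}{1-\alpha}.
\]
Since $K<1$, both $y_\pm$ exceed $1>K$ once $\varepsilon<1-K$. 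So both measures lie in $\mathcal P(\R_+)$ with mean one. If one prefers the compensating atom fixed, a third small atom can absorb the $\pm\alpha\varepsilon$ mean discrepancy instead; I will use the version above because it is simplest. "Sufficiently small" means $\varepsilon<\min\{K,1-K\}$, so that $K-\varepsilon>0$ and the auxiliary atoms stay above $K$.

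For the threshold probabilities, the atoms $y_\pm$ lie above $K$, and $K-\varepsilon\le K<K+\varepsilon$. Hence
\[
  \Q_\varepsilon^-(X\le K)=\alpha,\qquad \Q_\varepsilon^+(X\le K)=0,
\]
and the difference is exactly $\alpha$ for every such $\varepsilon$.

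For the call prices, I would use that each call payoff $x\mapsto(x-k)^+$ is $1$-Lipschitz. Then
\[
  |c_{\Q^+}(k)-c_{\Q^-}(k)|
  \le \alpha\,|(K+\varepsilon-k)^+-(K-\varepsilon-k)^+|
   +(1-\alpha)\,|y_+-y_-|
  \le 2\alpha\varepsilon+2\alpha\varepsilon .
\]
This bound is uniform in $k$, so the call vectors at any finite strike set differ by at most $4\alpha\varepsilon=O(\varepsilon)$, with a constant independent of the strikes. The coordinatewise interval hull therefore has width $O(\varepsilon)$.

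For the last claim, let $\mathcal I_\varepsilon$ be the quote set whose intervals are these hulls, with the forward constraint included. Both $\Q_\varepsilon^\pm$ belong to $\cM(\mathcal I_\varepsilon)$. The sharp bounds on $p_K$ over $\cM(\mathcal I_\varepsilon)$ thus satisfy $\sup-\inf\ge\alpha$ for every $\varepsilon$, even though the vanilla price intervals shrink to points as $\varepsilon\to0$.

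The only delicate point is interpreting the conclusion correctly. The limit measure $\alpha\delta_K+(1-\alpha)\delta_{y_0}$ has an atom at $K$, and the family is discontinuous there. That is the point of the proposition: the CDF bound is not a continuous function of the vanilla price vector at an atomic threshold. I expect no real obstacle beyond this. Positive interval widths, if required, can be ensured by padding the hull by $\varepsilon$, which keeps its width $O(\varepsilon)$.
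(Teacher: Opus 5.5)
Your construction is correct and is essentially the paper's proof. The paper uses $(1-\alpha-\beta)\delta_0+\alpha\delta_{K\pm\varepsilon}+\beta\delta_{x_\pm}$ with $0<\beta<1-\alpha$, and your two-atom version is the boundary case $\beta=1-\alpha$, with the same $1$-Lipschitz (transport) bound $2\alpha\varepsilon+2\alpha\varepsilon=4\alpha\varepsilon$ on every call-price difference and the same exact gap $\alpha$ in $\Q(X\leq K)$; you additionally make explicit the smallness condition $\varepsilon<\min\{K,1-K\}$ and that both laws lie in the hull-defined quote set.
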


Consequently, by the same construction, the map from vanilla price data
to sharp bounds on the
distribution function is discontinuous at any law with an atom at the
threshold, and refining the grid does not restore uniform accuracy.  This
is the reason the finite system carries the threshold mass explicitly, and
Section~\ref{sec:benchmark} quantifies the effect on real and synthetic
cross sections.

\subsection{Inconsistent cross sections}
\label{sec:relax}

Some cross sections admit no distribution consistent with every quote interval.
For these cases we relax each interval by nonnegative slacks, penalize the
slacks through fixed nonnegative weights, and impose a common budget on the
weighted slack sum.  The budget is fixed before the main analysis and is
shared by every nested quote set within a cross section, so the resulting regions
remain nested.  The relaxed system is again a finite linear system, and the
reconstruction of Section~\ref{sec:algorithm} applies without change.
Appendix~\ref{app:relax} states the relaxed system, the weights, and the
relaxed region under the budget, written $\cR_{K,\bar\delta}(\mathcal I)$.
Sharp results under the original intervals use feasible cases satisfying
Assumption~\ref{ass:interior}.  Relaxed results are reported separately.

\section{Frontier reconstruction}
\label{sec:algorithm}

\subsection{Support programs}

Collect the decision variables in $y=(z^\top,m^\top,a_K)^\top$ and write
$\cY$ for the feasible polytope: $\cY=\mathcal F_K(\mathcal I)$ under the
original intervals, and its slack-augmented analogue under the common budget
(Appendix~\ref{app:relax}), in which case $G_K$ acts on the $(z,m,a_K)$
block of $y$.  The mass and first-moment normalizations, together with the
budget on any slacks, make $\cY$ bounded, so its projection is a compact
polygon.  For a direction $u=(u_p,u_L)\in\R^2$, the support program is the
linear program
\begin{equation}
  h(u)
  =
  \max_{y\in\cY}\ u^\top G_Ky.
  \label{eq:support}
\end{equation}

\begin{corollary}[Sharp portfolio bounds]
\label{cor:portfolio}
Let Assumption~\ref{ass:interior} hold and let
$\cY=\mathcal F_K(\mathcal I)$.  For every $u=(u_p,u_L)\in\R^2$,
\[
 h(u)
 =
 \sup_{\Q\in\cM(\mathcal I)}
 \E_\Q\!\left[u_p\one\{X\leq K\}+u_L(K-X)^+\right],
\]
and the corresponding sharp lower value is $-h(-u)$.
\end{corollary}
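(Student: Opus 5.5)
The corollary follows by combining Theorem~\ref{thm:sharp} with elementary facts about support functions. The plan is to show that both sides equal the support function of $\cR_K(\mathcal I)$ in direction $u$. Write $f_u(X)=u_p\one\{X\leq K\}+u_L(K-X)^+$.

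\emph{Step 1: the right-hand side.} For each $\Q\in\cM(\mathcal I)$, linearity of expectation gives
\[
\E_\Q f_u(X)=u^\top\bigl(p_K(\Q),L_K(\Q)\bigr).
\]
Hence the right-hand side equals $\sup\{u^\top r: r\in\cR_K^{\mathrm{att}}(\mathcal I)\}$. A linear functional has the same supremum over a set and over its closure, so this is $\sup\{u^\top r: r\in\cR_K(\mathcal I)\}$.

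\emph{Step 2: the left-hand side.} Since $G_K$ is linear, $u^\top G_K y$ is the linear function $y\mapsto u^\top G_K y$. Therefore
\[
h(u)=\sup\{u^\top r: r\in G_K(\mathcal F_K(\mathcal I))\}.
\]
The set $\mathcal F_K(\mathcal I)$ is a polytope: it is bounded because $0\le z_j\le1$, $0\le m_j\le 1$ and $0\le a_K\le z_{j_K}$, and it is nonempty by Assumption~\ref{ass:interior}. So the maximum in \eqref{eq:support} is attained and finite.

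\emph{Step 3: combine.} Theorem~\ref{thm:sharp} gives $G_K(\mathcal F_K(\mathcal I))=\cR_K(\mathcal I)$, so the two suprema coincide. Note that the supremum over $\cM(\mathcal I)$ need not be attained, which is why the statement uses $\sup$ rather than $\max$. For the lower value, write $\inf_\Q \E_\Q f_u=-\sup_\Q \E_\Q f_{-u}$ and apply the identity with $-u$ to get $-h(-u)$.

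No step is a real obstacle once Theorem~\ref{thm:sharp} is available. The only point needing care is the passage from the attainable image to its closure in Step 1. This is handled by continuity of $r\mapsto u^\top r$: every point of the closure is a limit of attainable points, so it cannot raise the supremum.
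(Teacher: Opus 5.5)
Your proposal is correct and matches the paper's own argument: the corollary is Theorem~\ref{thm:sharp} rewritten through support functions, using that the supremum of a linear functional over the attainable image $\cR_K^{\mathrm{att}}(\mathcal I)$ equals its supremum over the closure. You also fill in details the paper leaves implicit, namely that $\mathcal F_K(\mathcal I)$ is bounded and nonempty so the support program attains its value, and that the lower value follows from the sign flip $u\mapsto -u$.
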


The corollary restates Theorem~\ref{thm:sharp} through support functions,
which coincide for a convex set and its closure.  Every portfolio of the
digital and put payoffs is priced sharply by one linear program, and once
the polygon is known, no further program is needed.  The vertex selection
used for polygon enumeration and the primal--dual accuracy checks are
described in Appendix~\ref{app:algorithm}.

\subsection{Adaptive hull algorithm}

The hull procedure adapts standard support-function and outer-approximation
methods \citep{ehrgott2012dual,lohne2016equivalence}.  Let $V_t$ be the
support points collected after iteration $t$ and
$H_t=\operatorname{conv}(V_t)$.  The reconstruction initializes $V_0$ from
eight evenly spaced directions, queries the outward normal of each current
edge, and adds any support point outside the current hull.  In exact
arithmetic this process recovers a point, a line segment, or the full
polygon after finitely many support solves.  Proposition~\ref{prop:hull} in
Appendix~\ref{app:proofs} gives the formal result, and
Appendix~\ref{app:algorithm} gives the implementation.

Dual support halfspaces also define an outer polygon.  The resulting
inner--outer area difference measures reconstruction error.
Appendix~\ref{app:algorithm} gives the construction and convergence criteria.

\subsection{The conditional-depth image}

For $p>0$, define $\phi(p,L)=(p,L/p)$.  The map preserves the first
coordinate but not convexity, so the depth image is recovered from the
polygon edge by edge.  To summarize joint ambiguity away from zero
probability, fix $p_0>0$.  For a nonempty compact convex $\cC\subset\R^2$,
let $\cC(p)=\{L:(p,L)\in\cC\}$, and when the section is nonempty write
$L_{\cC}^+(p)=\max\cC(p)$ and $L_{\cC}^-(p)=\min\cC(p)$.  Define
\begin{equation}
  \mathcal A_{p_0}(\cC)
  =
  \int_{\{p\geq p_0:\cC(p)\ne\varnothing\}}
  \frac{L_{\cC}^+(p)-L_{\cC}^-(p)}{p}\,dp.
  \label{eq:area}
\end{equation}
We call $\mathcal A_{p_0}(\cC)$ the transformed area of $\cC$.  The next
result shows that it is the planar area of the depth image.

\begin{proposition}[Depth image and transformed area]
\label{prop:depth}
On $\{p>0\}$, $\phi$ is a diffeomorphism with
$|\det D\phi(p,L)|=1/p$.  A nonvertical polygon edge with equation
$L=\alpha_e+\beta_e p$ maps to the arc $s(p)=\beta_e+\alpha_e/p$, and a
vertical edge maps to a vertical segment.  For every nonempty compact
convex $\cC$ and every $p_0>0$,
\[
 \mathcal A_{p_0}(\cC)
 =
 \operatorname{Leb}_2\!\left(\phi(\cC\cap\{p\geq p_0\})\right),
\]
where $\operatorname{Leb}_2$ is planar Lebesgue measure.
\end{proposition}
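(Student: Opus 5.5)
The plan is to treat the three claims in turn, doing all the work in the $(p,L)$ coordinates. First, $\phi(p,L)=(p,L/p)$ is smooth on $\{p>0\}$, with smooth inverse $\phi^{-1}(p,s)=(p,ps)$. So it is a diffeomorphism onto $\{p>0\}\times\R$. Its Jacobian matrix is lower triangular with diagonal entries $1$ and $1/p$, which gives $|\det D\phi|=1/p$ directly.

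Next come the edges. A nonvertical edge $L=\alpha_e+\beta_e p$ is sent to $s=L/p=\beta_e+\alpha_e/p$, with $p$ running over the same interval. A vertical edge $\{p=p_e\}\times[L_1,L_2]$ with $p_e>0$ is sent to $\{p_e\}\times[L_1/p_e,L_2/p_e]$, since $\phi$ restricted to a fixed $p$ is a positive scaling of $L$. These steps are routine.

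For the area identity, I would use the fiber structure rather than the general change-of-variables formula. Put $\cC_0=\cC\cap\{p\ge p_0\}$, which is compact and convex. For each $p\ge p_0$ with $\cC(p)\neq\varnothing$, convexity and compactness make the section $\cC(p)$ the closed interval $[L^-_\cC(p),L^+_\cC(p)]$. Because $\phi$ preserves $p$ and scales $L$ by $1/p$, the $p$-section of $\phi(\cC_0)$ is the interval $[L^-_\cC(p)/p,\,L^+_\cC(p)/p]$, of length $(L^+_\cC-L^-_\cC)/p$.

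The set $\phi(\cC_0)$ is compact, as the continuous image of a compact set. Tonelli's theorem then gives that its Lebesgue measure equals the integral over $p$ of the one-dimensional measures of its sections. This is exactly \eqref{eq:area}.

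Measurability of the integrand needs a short check. $L^+_\cC$ is concave and $L^-_\cC$ is convex on the compact interval $\{p:\cC(p)\neq\varnothing\}$, so both are upper or lower semicontinuous there and hence Borel. The integrand is also bounded, by $\max(L^+-L^-)/p_0$, so the integral is finite.

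As a cross-check, the same identity follows from the change-of-variables formula $\int_{\cC_0}|\det D\phi|\,d\mathrm{Leb}_2=\int_{\cC_0}p^{-1}\,dp\,dL$ combined with Fubini. Either route requires that $\cC_0$ stay in $\{p\ge p_0>0\}$, which is why $p_0$ is fixed away from zero.

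The main obstacle is degenerate cases rather than any substantive estimate. These are: $\cC_0$ empty, where both sides vanish; $\cC_0$ lower dimensional, such as a segment or point, where both sides are zero; and vertical edges or boundary points of the section interval, which form a null set. I would handle all of these by noting that the Tonelli fiber argument needs no interior or regularity assumptions, only measurability of $\phi(\cC_0)$, and that is immediate from compactness.
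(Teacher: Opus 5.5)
Your proof is correct and is essentially the ``direct computation'' the paper intends; the paper gives no further detail. The Jacobian factor $1/p$ is exactly your fiberwise scaling of the sections, so your Tonelli section argument and your change-of-variables cross-check are the same calculation. One small correction: the upper and lower semicontinuity of $L^\pm_{\cC}$ comes from $\cC$ being closed, not from concavity or convexity alone, since a concave function on a closed interval can jump down at an endpoint. This does not affect the result, because measurability of the section lengths is automatic from Tonelli applied to the compact set $\phi(\cC\cap\{p\geq p_0\})$.
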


Both statements are direct computations.  For polygonal $\cC$ the integral
in \eqref{eq:area} has the closed form given in
Appendix~\ref{app:algorithm}.  On any domain $p\geq p_0>0$, the relative
boundary of $\cT_K(\mathcal I)$ is therefore a finite collection of
vertical segments and arcs of the form $\beta_e+\alpha_e/p$.  The frontier
itself is defined for every $p>0$.  The floor $p_0$ enters only the area
summary.  Because area depends on the chosen coordinates, the full frontier
remains the primary identified object.

\subsection{Dual certificates}

The dual of \eqref{eq:support} attaches multipliers to the bid, ask, mass,
first-moment, cell, and threshold-atom constraints.  Strong duality gives the
same support value in both programs (Appendix~\ref{app:dual}).  For each
exposed boundary point, an optimal dual solution decomposes the bound into
cash and forward terms, a signed portfolio of retained quote endpoints, and
nonnegative structural cell terms.  Quote constraints with nonzero
multipliers identify which intervals produce the bound.  Because dual
optima may be nonunique, the reported multipliers refer to the chosen
optimal solution.  The structural terms verify cellwise affine validity and
positivity over the continuous state space.  We call this decomposition a
dual certificate.

Read as a portfolio, the certificate is a static super-replication of the
target payoff: cash and forward positions plus retained quotes executed at
bid or ask, whose combined payoff dominates the digital and put target
state by state \citep{davis2007range,cox2011double}.  The bound is the
cost of that portfolio, and lower bounds correspond to sub-replication.
Each reported bound is therefore enforceable by trading the quotes that
the certificate names.  For relaxed cross sections the dual carries an
additional budget multiplier, so the portfolio reading applies to the
original intervals.

\section{Data and empirical design}
\label{sec:data}

\subsection{SPX cross sections}

The application uses end-of-day SPX call and put bid--ask quotes from
OptionMetrics, covering 2 January 2013 through 31 August 2023.  Normalization uses the
SPX level and a forward estimate for the selected date and
maturity.  The robustness analysis applies multiplicative shifts of
$\pm20$ basis points.

The sampling rule selects one eligible cross section per calendar week, preferring
midweek dates.  Each selected date must offer a short expiry of 7--21 days
closest to 14 and a medium expiry of 22--60 days closest to 45, with at
least 18 distinct put strikes in log-moneyness $[-0.35,0.05]$.  The
resulting sample has \SampleWeeks{} dates from 2 January 2013 to 30 August
2023, and 98.92\% of them are Wednesdays.
Appendix~\ref{app:empirical} records the weekday fallback order and the tie
rules.

We retain positive strikes, nonnegative bids, positive asks, noncrossed
bid--ask pairs, positive trading volume, and expiries between 7 and 120
days.  The primary
information sets use puts converted by parity.  A call--put cross section is
reserved for robustness.  The complete put wing comprises all retained
puts with log-moneyness in $[-0.35,0.05]$ for the selected date and expiry.
Multiple retained observations at the same strike remain separate
constraints.

\subsection{Prespecified estimands and information sets}

The main thresholds are $K\in\{0.85,0.90\}$, forward-normalized declines of
15\% and 10\%.  Within each cross section we compare four objects:
\begin{enumerate}[label=(\roman*),leftmargin=2.4em]
  \item the marginal benchmark: the wedge-adjusted product of separate
  marginal bounds, defined below;
  \item the put pair: the nearest puts at or below and at or above $K$;
  \item the local eight: the put pair plus the six nearest remaining
  put strikes;
  \item the complete put wing $\mathcal I^C$: every eligible put in the cross section.
\end{enumerate}
Tie rules for the put pair are recorded in
Appendix~\ref{app:empirical}, and the slack weights used in the relaxation
analysis in Appendix~\ref{app:relax}.

For a nonempty compact convex $\cC\subset\R^2$, the wedge-adjusted marginal
benchmark is
\[
  \cB_K(\cC)=
  \left(
  [\min_{\cC}p,\max_{\cC}p]\times
  [\min_{\cC}L,\max_{\cC}L]
  \right)
  \cap\{(p,L):0\leq L\leq Kp\},
\]
with extrema over $(p,L)\in\cC$, so that
$\cR_K(\mathcal I)\subseteq\cB_K(\cR_K(\mathcal I))$.  The reported area
ratio is
\[
  \frac{\mathcal A_{p_0}(\cC)}
       {\mathcal A_{p_0}(\cB_K(\cC))},
\]
the share of the transformed benchmark area that survives joint
identification.  By Proposition~\ref{prop:depth}, both areas are measured in
probability and depth coordinates.  Smaller ratios indicate tighter joint
identification.  A ratio of one means that joint restrictions add nothing
beyond the marginal intervals and the universal wedge on $p\geq p_0$.  The
ratio is reported for two-dimensional joint sets with positive benchmark
area.  Point and line sets are classified as degenerate.  The main specification uses
$p_0=10^{-4}$, with $10^{-5}$ and $10^{-3}$ as robustness values.

For a quote-based baseline $A$, write $\cR_K^A=\cR_K(\mathcal I^A)$ under
the original intervals and $\cR_{K,\bar\delta}^A=\cR_{K,\bar\delta}(\mathcal I^A)$
under the common budget.  Because the quote sets are nested within each
cross section, $\mathcal I^A\subseteq\mathcal I^{C}$ implies
$\cR_K^{C}\subseteq\cR_K^A$ and
$\cR_{K,\bar\delta}^{C}\subseteq\cR_{K,\bar\delta}^A$.  When the baseline
area is positive, the identification gain from set $A$ to the complete put
wing is
\[
  \Delta_A
  =
  1-\frac{\mathcal A_{p_0}(\cR_K^{C})}
           {\mathcal A_{p_0}(\cR_K^{A})},
\]
the fractional contraction in transformed area from the additional quotes.
It satisfies $0\leq\Delta_A\leq1$ in exact arithmetic.  Sensitivity
calculations use the same formula with $\cR_{K,\bar\delta}^A$ in place of
$\cR_K^A$.

Feasibility under the original bid--ask constraints is the primary criterion
and holds in \HardShare{} of the \MainCases{} cases with the complete put
wing.
Assumption~\ref{ass:interior} holds in \InteriorQualifiedCases{} of these
cases.  Cross sections with ambiguous contract identity are excluded, leaving
\QualifiedCases{} cases in the primary empirical sample.  Sixteen feasible
cases fail Assumption~\ref{ass:interior}, and 136 cases are infeasible without
relaxation and therefore enter the separately reported common-budget analysis.

\subsection{Sampling variation and robustness}

We form percentile intervals for median identification gains from 2,000
draws of a circular block bootstrap.  Blocks contain eight
chronologically ordered dates, the cube-root rule applied to the number of
sampled dates in each primary cell.  Resampling dates rather than
individual quotes preserves market cross sections and allows short-range serial
dependence \citep{lahiri2003resampling}.

Robustness checks target four design choices: the probability floor,
forward and spread inputs, strike and duplicate selection, and the maturity
and threshold grid.  Probability-floor results reuse the reconstructed
polygons.  The additional checks use selected date subsamples.
Appendix~\ref{app:empirical} summarizes the specifications.

\section{Empirical results}
\label{sec:results}

\subsection{Main identification gains}

Table~\ref{tab:main} summarizes identification gains in the primary sample.
Across the four combinations of horizon and threshold, the complete put
wing contracts the area of the local eight by 5.35\% to 18.16\% at the
median.
Quotes beyond the local eight therefore eliminate between one-twentieth and
one-fifth of the joint ambiguity that the nearby strikes leave open.

The economic content of this contraction is global rather than local.
Quotes away from $K$ restrict where the common risk-neutral distribution can
place compensating probability and first moment while keeping every modeled
price inside its interval and satisfying $\E_\Q[X]=1$.  A reallocation that
fits nearby strikes can therefore fail elsewhere on the wing.  The complete
put wing disciplines the feasible probability--depth combinations.  It does
more than refine local interpolation at $K$.

\begin{table}[t]
\centering
\caption{Incremental identification from the complete put wing}
\label{tab:main}
\begin{tabular}{llrrrr}
\toprule
Horizon & $K$ & \multicolumn{2}{c}{Put pair} &
\multicolumn{2}{c}{Local eight}\\
\cmidrule(lr){3-4}\cmidrule(lr){5-6}
& & Median & 95\% interval & Median & 95\% interval\\
\midrule
Short  & 0.85 & 74.33\% & [70.15,77.59] &  5.35\% & [0.52,13.92]\\
Short  & 0.90 & 83.10\% & [81.42,84.27] & 15.03\% & [10.38,22.75]\\
Medium & 0.85 & 84.75\% & [83.28,86.25] &  8.12\% & [2.73,13.71]\\
Medium & 0.90 & 89.03\% & [87.63,90.39] & 18.16\% & [13.19,21.56]\\
\bottomrule
\end{tabular}
\begin{minipage}{0.94\textwidth}
\footnotesize\emph{Notes:} Each entry is
$1-\mathcal A_{p_0}(\cR_K^{C})/
\mathcal A_{p_0}(\cR_K^{\mathrm{baseline}})$, with $p_0=10^{-4}$.
Larger entries indicate greater contraction relative to the stated baseline.
For each threshold, there are 535 primary short-horizon cases and 448
primary medium-horizon cases, giving 1,966 cases in total.
Intervals use 2,000 circular block bootstrap draws with eight
chronologically ordered dates satisfying the primary-sample conditions per
block.  Each point
estimate uses the midpoint of its numerical area interval.  Intervals are
pointwise descriptive summaries of sampling variation.  Numerical
optimization error is assessed separately.
\end{minipage}
\end{table}

Across primary-sample cases with the complete put wing, the median ratio of
joint area to the area of the wedge-adjusted marginal benchmark is
\PrimaryAreaRatio{}.  In the median case, 36.6\% of the transformed benchmark
area is therefore unattainable: separate sharp marginals, even combined with
the algebraic inequality $L\leq Kp$, admit many probability--depth pairs that
no single quote-compatible distribution can generate.

Within the primary sample, median probability widths under the complete
put wing are 0.0045 and 0.011 at the short horizon, and 0.019 and 0.037 at
the medium horizon, for $K=0.85$ and $0.90$.  Marginal probability uncertainty
therefore remains visible even when joint restrictions sharply reduce the
probability--depth region.

Figure~\ref{fig:frontier} displays the representative case used for the dual
certificate: the short-horizon cross section dated 5 July 2017, with maturity
19 July 2017 and $K=0.90$.  It satisfies the original quote constraints and
Assumption~\ref{ass:interior}.

\begin{figure}[t]
\centering
\includegraphics[width=\textwidth]{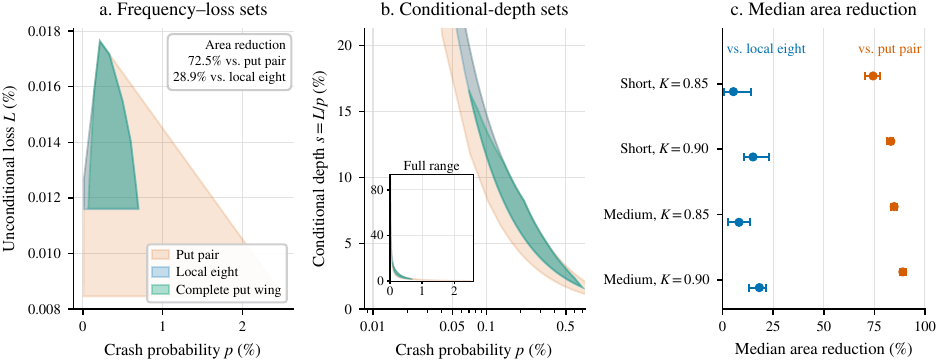}
\caption{Sharp joint frontiers and primary-sample identification gains.
The representative case is also used for the dual certificate in
Figure~\ref{fig:dual}.  Filled regions in cross sections (a) and (b) correspond to the
put pair, local eight,
and complete put wing.  Panel (b) zooms the depth range on a log-$p$ axis
and its inset restores the full range.  Panel (c) reports primary-sample
median area reductions and circular block bootstrap intervals for the four
main specifications.}
\label{fig:frontier}
\end{figure}

\subsection{Stability and robustness}

Appendix Table~\ref{tab:subsample} reports subperiod summaries and results
that exclude the calibration dates, both under the primary sample definition.
The median area ratio varies little across subperiods.

Median identification gains remain positive under prespecified perturbations
to forwards, quotes, strike coverage, horizons, and thresholds.  Appendix
Table~\ref{tab:robustness} reports the range of median ratios for each
specification.  One caution applies when reading these ratios: the
denominator is itself a moving benchmark, so doubled spreads can lower the
ratio even as absolute probability widths increase.

\section{Price-grid benchmark and dual evidence}
\label{sec:benchmark}

\subsection{Median convergence and worst-case persistence}

The benchmark follows the price-grid approach of
\citet{barratt2024convex}.  We repeat the support analysis on price grids
with 51, 101, 201, and 401 nodes, with support caps of 2 and 3, and with
the threshold either inserted as a node or centered in a cell.  Every grid
solve is compared with a continuous-state solve in the same 64 directions
and under the same quote loss budget, so support errors compare the two
state representations at a common relaxation level.  Each specification
contains 36 cases: 32 empirical combinations of cross section and threshold, and
four synthetic atomic cases.  Appendix~\ref{app:empirical} details the grid
construction, the shared budget, and the error definitions.

The budget required for grid feasibility is material at 51 nodes, with
median increments over the continuous-state minimum of
$(1.22$--$1.63)\times10^{-4}$ at cap 2, and it falls to the numerical floor
of $10^{-10}$ at 201 and 401 nodes.  Because the budget varies with the
grid, cross-grid changes combine grid refinement with changes in the
relaxation level.  Table~\ref{tab:grid} reports the values.

Figure~\ref{fig:boyd} separates typical from worst-case error.  Median
error declines strongly and is numerically zero with $K$ inserted and 401
nodes.  The maximum, however, remains 0.15 with $K$ inserted, and centering
$K$ in a cell lowers the plateau to about 0.075.  Both plateaus come from
the synthetic 0.15 threshold atom, as Proposition~\ref{prop:grid} predicts:
the inserted-node plateau equals the atom mass, and the centered-cell
plateau is about half as large.  Grid density improves the approximation of
continuous payoffs while the event convention governs atomic cases.  An
inserted node assigns all of its mass to $\{X\leq K\}$, and representing
limiting mass immediately to its right would require a separate one-sided
threshold state.  In the subset of real SPX cross sections at support cap 2, the
maximum error falls from 0.0751 to 0.0100 for centered grids and from
0.0784 to 0.0138 with $K$ inserted.  The persistent maxima therefore come
from the atomic counterexample, while errors on typical cross sections decline with
refinement.

\begin{figure}[t]
\centering
\includegraphics[width=\textwidth]{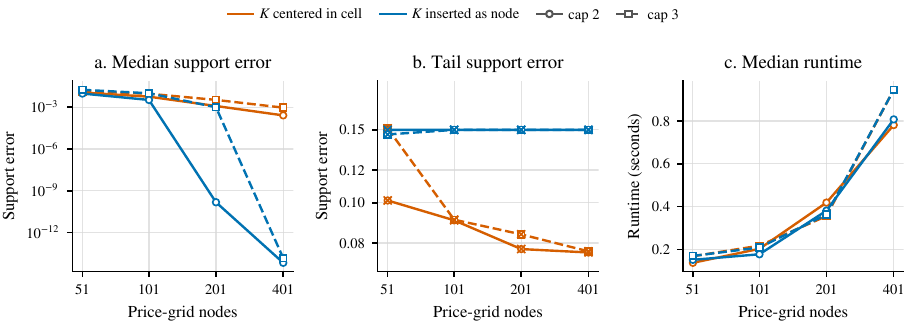}
\caption{Price-grid accuracy and computation time.  Panel (a) first takes
the median support error over directions within each case, then reports the
median of those values across cases.  In panel
(b), lines report the 95th percentile of each case's maximum support error
over directions, and crosses
report sample maxima.  Panel (c) reports median computation time.
Synthetic threshold-atom designs generate the horizontal worst-case plateau.
Every plotted grid is compared with a continuous-state solve under the same
quote loss budget for that grid.}
\label{fig:boyd}
\end{figure}

Computation time rises with grid density, as panel (c) shows.  The two methods
address different computational objectives: the continuous-state method
reconstructs an entire frontier, whereas the benchmark evaluates a fixed set
of directions on each grid.

\subsection{Primal--dual verification and a dual certificate}

For the representative case, primal--dual gaps and KKT residuals are below
$2\times10^{-14}$ in the directions that maximize frequency, loss, and the
frequency--loss trade-off.  At a $10^{-8}$ multiplier threshold, the
certificate for each direction uses two
material quote endpoints.  The remaining terms enforce mass, cell-moment, and
threshold-atom restrictions.  Figure~\ref{fig:dual} displays an optimal
trade-off dual certificate.

\begin{figure}[t]
\centering
\includegraphics[width=\textwidth]{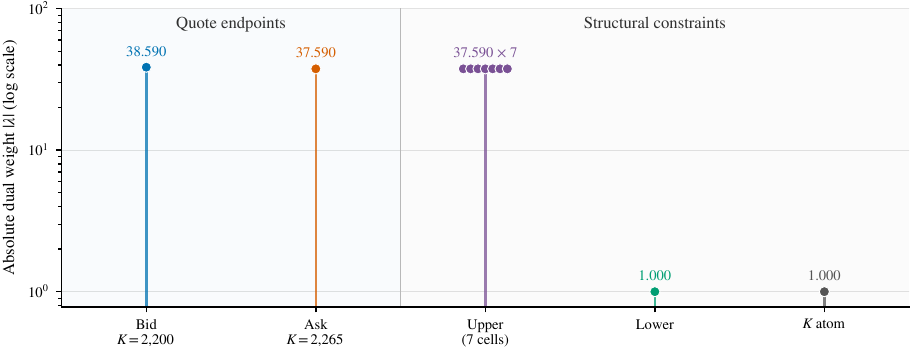}
\caption{Representative trade-off dual certificate for direction $u^\star=(1,-1)$
in the representative frontier, shown at its original scale.
Axis labels identify each quote by strike and endpoint.  Only material quote
weights are displayed.}
\label{fig:dual}
\end{figure}

\section{Conclusion}
\label{sec:conclusion}

Option-implied crash probability and conditional depth are identified
jointly, not separately.  Marginal intervals erase the trade-off that a
common risk-neutral distribution imposes.  The crash frontier
preserves it.

Under a verifiable interior condition, the frontier has a finite linear
representation with explicit threshold mass, and support programs recover
its boundary along with dual certificates at the quote level.  In more than
ten years of SPX cross sections sampled weekly, the complete put wing contracts transformed
area by 5.4--18.2\% relative to the local eight and by 74.3--89.0\%
relative to the put pair.  Strikes away from the threshold therefore
carry material information about which joint tail scenarios remain
compatible with the market.

Price grids are not a substitute for the threshold coordinate.  Median
grid errors decline sharply with refinement, but worst-case errors at
atomic thresholds persist because the event indicator is discontinuous.
The continuous-state frontier handles that boundary directly and converts a
finite option cross section into valuation bounds, stress scenarios, comparisons
across strike sets, and certificates that trace each boundary value back to
observed quotes.

The analysis prices one maturity at a time and stays under the risk-neutral
measure.  Partitions that couple several maturities, and preference
restrictions that map the frontier into bounds on physical crash
probabilities, are natural extensions.

\appendix
\renewcommand{\thetable}{A\arabic{table}}
\setcounter{table}{0}

\section{Appendix: Proofs, computation, and supplementary evidence}
\label{app:supp}

\subsection{Proofs and auxiliary results}
\label{app:proofs}

\subsubsection{Cell representation}

\begin{lemma}[Cell-moment representation for retained payoffs]
\label{lem:cell}
Let $g_\ell(x)=\alpha_{\ell j}+\beta_{\ell j}x$ on each cell $I_j$ for every
retained continuous payoff $g_\ell$, where
$\alpha_{\ell j},\beta_{\ell j}\in\R$ are scalar coefficients.  Conditions
\eqref{eq:massmoment}--\eqref{eq:tailcell} are necessary.  Conversely, for
every $(z,m)$ satisfying them, there are unit-mean probability measures
$\Q_n$ such that
\[
 \E_{\Q_n}[g_\ell(X)]
 \longrightarrow
 \sum_{j=0}^J(\alpha_{\ell j}z_j+\beta_{\ell j}m_j)
 \qquad\text{for every }\ell.
\]
\end{lemma}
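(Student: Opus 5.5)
Necessity is the easy half: I will read off \eqref{eq:massmoment}--\eqref{eq:tailcell} directly from the definitions and then price each retained payoff cell by cell. Take any unit-mean $\Q$ and put $z_j=\Q(I_j)$ and $m_j=\E_\Q[X\one\{X\in I_j\}]$. Nonnegativity and the two normalizations hold because the cells partition $\R_+$ and $\E_\Q X=1$. On a bounded cell we have $b_j\le X<b_{j+1}$, which gives $b_jz_j\le m_j\le b_{j+1}z_j$, and on the tail cell $X\ge b_J$ gives $m_J\ge b_Jz_J$. Since $g_\ell$ is affine on each $I_j$, summing $\E_\Q[(\alpha_{\ell j}+\beta_{\ell j}X)\one\{X\in I_j\}]$ over cells yields the stated linear form exactly, with no limit needed in this direction.

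For sufficiency I will build discrete measures cell by cell. If $z_j>0$ and $j<J$, put an atom of mass $z_j$ at $x_j=m_j/z_j\in[b_j,b_{j+1}]$. If $x_j<b_{j+1}$ the atom lies inside $I_j$ and the cell contributes its prescribed $(z_j,m_j)$ exactly. If $x_j=b_{j+1}$, the closed right endpoint belongs to the next cell, so I shift the atom to $b_{j+1}-1/n$ and record a mean defect of order $z_j/n$. If $z_j=0$ but $m_j>0$, the constraints force $m_j\le b_{j+1}\cdot 0=0$, so this case cannot arise on bounded cells.

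The main obstacle is the tail cell, where $z_J=0$ with $m_J>0$ is allowed: first moment with no mass, which can only be reached in the limit. My plan is to place mass $m_J/n$ at a far point $n$, which contributes moment $m_J$ and vanishing mass. To pay for that mass, I remove $m_J/n$ from some cell with positive mass (one exists because $\sum z_j=1$) or rescale all atoms by $(1-m_J/n)$. When $z_J>0$, I place an atom at $m_J/z_J\ge b_J$, which is admissible. All perturbations change total mass and moment by $O(1/n)$.

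To restore exact unit mean and total mass one, I add a small correcting atom: an atom of mass $\eta_n\to0$ at a suitable point, together with a renormalization, so that the mass sums to one and the mean is exactly one. These are the two auxiliary atoms mentioned in Remark~\ref{rem:parsimony}. Because every $g_\ell$ is continuous, hence bounded on compacts, and has at most linear growth ($\beta_{\ell J}x$ on the tail), each of these changes moves $\E_{\Q_n}g_\ell$ by an amount of order (mass moved)$\times|g_\ell|$ plus $|\beta_{\ell J}|\times$(moment moved). Both vanish. Continuity of $g_\ell$ is what lets the endpoint shift $b_{j+1}-1/n$ be harmless, even though it changes which cell's coefficients apply. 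Passing to the limit then gives the claimed convergence for every $\ell$.
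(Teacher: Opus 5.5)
Your argument is correct. The cellwise part matches the paper's construction: an atom of mass $z_j$ at $m_j/z_j$, an excluded right endpoint approached from the left, and, for the terminal pair $(0,m_J)$, a vanishing mass sent to infinity while carrying moment $m_J$.

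The genuine difference is the final normalization. Write $M_n$ and $U_n$ for the mass and first moment of your uncorrected measure. The paper divides by $M_n$ and then dilates the state, $\widehat X_n=(M_n/U_n)X_n$. This makes the mean exactly one, and the payoff error is bounded in one line by $\operatorname{Lip}(g)\,|M_n/U_n-1|\,U_n/M_n$. You instead mix in a correcting atom and renormalize. To make this precise, put mass $U_n-M_n$ at $0$ when $U_n>M_n$, and mass $M_n-U_n$ at $2$ when $U_n<M_n$. After that, only the linear growth of the $g_\ell$ is needed.

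Each route has an advantage. The dilation is shorter, but it moves every atom, possibly pushing one that sits just left of a cell endpoint across it, or moving an atom off $K$. That is harmless for the continuous payoffs in this lemma but not for $\one\{X\le K\}$. Your correction leaves all atom locations fixed. In that respect it resembles the buffer correction the paper itself switches to in the proof of Theorem~\ref{thm:sharp}, where the atom at $K$ must not move.

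Two small points:
\begin{itemize}
\item The shift to $b_{j+1}-1/n$ is justified by the cell-$j$ affine formula itself, not by continuity. Continuity would instead let you omit the shift altogether for this lemma.
\item The two auxiliary atoms of Remark~\ref{rem:parsimony} are the buffer atoms from the theorem's proof, not part of this construction.
\end{itemize}
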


\begin{proof}
Integrating the cell bounds gives
\eqref{eq:massmoment}--\eqref{eq:tailcell}.  Since $g_\ell$ is affine on
$I_j$,
\[
  \E_\Q[g_\ell(X)\one\{X\in I_j\}]
  =\alpha_{\ell j}z_j+\beta_{\ell j}m_j,
\]
and summing over $j$ gives the stated pricing formula.

Conversely, the closures of the attainable moment cones on a bounded cell
and on the terminal cell are, respectively,
\[
 \{(z,m):z\geq0,\ b_jz\leq m\leq b_{j+1}z\},
 \qquad
 \{(z,m):z\geq0,\ m\geq b_Jz\}.
\]
For $z>0$, place mass $z$ at $m/z$.  An excluded bounded-cell endpoint is
approached from the left.  The remaining terminal case $(0,m)$ with $m>0$
is approached by mass $1/n$ at $nm$.

Applying these constructions cellwise yields finite measures $\mu_n$ whose
payoff vectors converge to the prescribed vector, while their masses $M_n$
and first moments $U_n$ converge to one.  Let $X_n$ have law $\mu_n/M_n$
and set $\widehat X_n=(M_n/U_n)X_n$, so that
$\E[\widehat X_n]=1$.  Every retained continuous piecewise-affine payoff is
Lipschitz, and hence
\[
 \left|\E g(\widehat X_n)-\E g(X_n)\right|
 \leq \operatorname{Lip}(g)
 \left|\frac{M_n}{U_n}-1\right|\frac{U_n}{M_n}
 \longrightarrow0.
\]
Together with $M_n\to1$, this proves the closure statement.
\end{proof}

\subsubsection{Proof of Theorem~\ref{thm:sharp}}

\begin{proof}
\emph{Forward inclusion.}
For $\Q\in\cM(\mathcal I)$, set $a_K=\Q(X=K)$.  The residual threshold-cell
pair is $(z_{j_K}-a_K,m_{j_K}-Ka_K)$.  It satisfies
\eqref{eq:atom1}, and, when $j_K<J$, its upper cell bound is
\[
  m_{j_K}-Ka_K\leq b_{j_K+1}(z_{j_K}-a_K),
\]
which is \eqref{eq:atom2}.  Splitting the cells below $K$ from the atom at
$K$ gives \eqref{eq:ptarget}--\eqref{eq:Ltarget}.  Therefore
$\cR_K^{\mathrm{att}}(\mathcal I)\subseteq
G_K(\mathcal F_K(\mathcal I))$.  The latter is the linear image of a compact
polyhedron and is closed, so
\[
 \cR_K(\mathcal I)\subseteq G_K(\mathcal F_K(\mathcal I)).
\]

\emph{Reverse inclusion: cell realization.}
Fix $y=(z^\top,m^\top,a_K)^\top\in\mathcal F_K(\mathcal I)$.  Apply the
cellwise constructions from Lemma~\ref{lem:cell}, before global
normalization, to every nonthreshold pair and to
$(z_{j_K}-a_K,m_{j_K}-Ka_K)$, and place mass $a_K$ at $K$.  The structural
constraints put the residual pair in the moment cone of
$[K,b_{j_K+1})$ or $[K,\infty)$.  A residual mean equal to $K$ is approached
from the right, keeping that mass outside $\{X\leq K\}$ while all
continuous-payoff values converge.  Lemma~\ref{lem:cell} handles the remaining
cell boundaries.  This gives finite nonnegative measures
$\widetilde\mu_n$ with mass $1+A_n$ and first moment $1+B_n$, where
$A_n,B_n\to0$.

\emph{Mass and mean correction.}
Fix $x_{\rm buf}>\max\{b_J,1\}$ and the unit-mass, unit-mean buffer
\[
  \mu_{\rm buf}
  =\left(1-\frac1{x_{\rm buf}}\right)\delta_0
   +\frac1{x_{\rm buf}}\delta_{x_{\rm buf}}.
\]
For sufficiently large $n$, set
$\varepsilon_n=(|A_n|+|B_n|)^{1/2}+n^{-1}<1$ and
$\bar\mu_n=(1-\varepsilon_n)\widetilde\mu_n+
\varepsilon_n\mu_{\rm buf}$.
If $\bar A_n$ and $\bar B_n$ are its mass and moment errors, set
\[
  \mu_n
  =\bar\mu_n
   +
   \left(-\bar A_n+\frac{\bar B_n}{x_{\rm buf}}\right)\delta_0
   -\frac{\bar B_n}{x_{\rm buf}}\delta_{x_{\rm buf}}.
\]
Then $\mu_n$ has exactly unit mass and unit mean.  The correction has total
absolute size $O(|A_n|+|B_n|)=o(\varepsilon_n)$, so the buffer atoms make
$\mu_n$ nonnegative for all large $n$.  It changes the target coordinates
and every retained vanilla value by $o(1)$, preserves support in $\R_+$, and
does not move the atom at $K$.  Thus the projections converge to $G_K(y)$.

\emph{Quote compatibility.}
Let $y^\circ$ be the vector from Assumption~\ref{ass:interior} whose modeled
prices lie strictly inside all quote intervals, and put
$y_\theta=(1-\theta)y+\theta y^\circ$.  For every $0<\theta\leq1$, its
modeled prices have positive distance from all quote endpoints, so the
preceding approximation can be chosen inside every quote interval.  A
diagonal sequence with $\theta\downarrow0$ is therefore quote-compatible and
has projection converging to $G_K(y)$.  Hence
\[
 G_K(\mathcal F_K(\mathcal I))\subseteq\cR_K(\mathcal I),
\]
which proves the equality.
\end{proof}

\subsubsection{Finite termination of adaptive hull reconstruction}

\begin{proposition}[Finite termination of adaptive hull reconstruction]
\label{prop:hull}
Let $P$ be a nonempty polygon.  For $u\in\R^2\setminus\{0\}$, define
\[
  h_P(u)=\max_{x\in P}u^\top x,
  \qquad
  x(u)\in\arg\max_{x\in P}u^\top x,
\]
with $x(u)$ selected as an endpoint of the exposed face.  At iteration $t$,
write $H_t=\operatorname{conv}(V_t)$.
For an edge $e$ of $H_t$, let $n_e$ be its outward unit normal and $v_e$ either
endpoint, and define
\[
  \gamma_t(e)=h_P(n_e)-n_e^\top v_e.
\]
Under exact arithmetic, the dimension probes in Appendix~\ref{app:algorithm}
identify a point or line segment.  If $P$ is two-dimensional, repeatedly adding $x(n_e)$ whenever
$\gamma_t(e)>0$ terminates after finitely many support queries with $H_t=P$.
\end{proposition}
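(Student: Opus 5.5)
The plan is to track the vertex set of the hull and show two things. Every added point is a vertex of $P$ not already in $V_t$. The procedure can stop only when $H_t=P$. Since a polygon has finitely many vertices, termination follows.

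\emph{Step 1 (added points are new vertices).} Because $x(u)$ is selected as an endpoint of the exposed face $\arg\max_{x\in P}u^\top x$, and that face is a vertex or an edge of $P$, every support point is a vertex of $P$. Hence $V_t\subseteq\operatorname{vert}(P)$ and $H_t\subseteq P$ for all $t$. If $\gamma_t(e)>0$, then $n_e^\top x(n_e)>n_e^\top v_e$. Every point of $H_t$ satisfies $n_e^\top x\le n_e^\top v_e$, because $e$ is an edge with outward normal $n_e$ and $n_e^\top x$ takes the same value at both endpoints. So $x(n_e)\notin H_t$, and in particular $x(n_e)\notin V_t$. Each positive-gap query therefore strictly enlarges $V_t$ inside the finite set $\operatorname{vert}(P)$. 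This bounds the number of additions by $|\operatorname{vert}(P)|$. The number of queries is bounded as well, because each iteration probes at most the finitely many edges of the current hull and each $H_t$ has at most $|\operatorname{vert}(P)|$ edges.

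\emph{Step 2 (zero gaps force equality).} This is the main point. Suppose $P$ is two-dimensional and, after the dimension probes and the eight initial directions, $H_t$ is two-dimensional with $\gamma_t(e)\le0$ for every edge. Since $H_t\subseteq P$ and $v_e\in P$, we have $h_P(n_e)\ge n_e^\top v_e$, so in fact $\gamma_t(e)=0$ for all edges. A two-dimensional polygon equals the intersection of its edge halfspaces:
\[
H_t=\bigcap_e\{x:n_e^\top x\le n_e^\top v_e\}.
\]
Each such halfspace then contains $P$, because $h_P(n_e)=n_e^\top v_e$. Hence $P\subseteq H_t$, and so $H_t=P$.

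\emph{Step 3 (degenerate starts).} I expect the remaining obstacle to be ensuring that $H_t$ becomes two-dimensional when $P$ is. The eight initial directions include opposite pairs $\pm u$. If all returned points were collinear, then $P$ would have zero width in some probed direction only if $P$ were lower-dimensional. Otherwise I will treat a segment $H_t$ as having two edges with opposite normals $\pm n$. If $P$ is two-dimensional, one of $h_P(\pm n)$ exceeds the value on the segment, so a point off the line is added. Defining edges this way for a segment, Step 1 still applies, and after finitely many additions $H_t$ is two-dimensional. The point and segment cases for lower-dimensional $P$ are exactly what the dimension probes in Appendix~\ref{app:algorithm} certify, and I take those as given.
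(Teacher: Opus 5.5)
Your argument for the two-dimensional case is essentially the paper's. Endpoint selection makes every support point a vertex of $P$. A positive gap $\gamma_t(e)>0$ places $x(n_e)$ strictly outside the edge halfspace and hence outside $H_t$. Finiteness of the vertex set bounds the number of queries. At termination, the sandwich $H_t\subseteq P\subseteq\bigcap_e\{x:n_e^\top x\le h_P(n_e)\}=H_t$ gives equality. Your handling of a segment start through the $\pm n$ probes is also what the paper does.

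What is missing is the first assertion of the proposition: that the dimension probes identify a point or a segment. You say you take this as given, but it is part of the statement, not a hypothesis. The paper proves it using the coordinate pairs $\pm(1,0)$ and $\pm(0,1)$, which belong to $D_0$.
\begin{itemize}
\item If all initial support points equal $q$, these probes give $\max_{x\in P}x_i=\min_{x\in P}x_i=q_i$ for $i=1,2$, so $P=\{q\}$.
\item If the initial points span a line $\ell=\{x:n^\top x=c\}$, then $h_P(n)=c$ and $h_P(-n)=-c$ hold exactly when $P\subseteq\ell$. In that case $P$ is a nontrivial segment $[A,B]$. Any coordinate pair not orthogonal to $B-A$ has unique maximizers $A$ and $B$, so the returned extreme points are the true endpoints.
\end{itemize}

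The same argument should replace your Step 3 sentence about collinear points, which is muddled as written. When $P$ is two-dimensional you must rule out a single-point start, because a point $H_0$ has no edges and your refinement never begins. The correct reason is that coincident support points across both coordinate pairs force zero width in two independent directions, and hence force $P$ to be a point.
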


\begin{proof}
Every collected support point belongs to $P$, so $H_t\subseteq P$.
If all initial points equal $q$, the opposing coordinate supports imply
$x_i=q_i$ for $i=1,2$ and every $x\in P$, hence $P=\{q\}$.  If the initial
points span $\ell=\{x:n^\top x=c\}$, the probes in directions $n$ and $-n$
establish $P\subseteq\ell$ exactly when their support values are $c$ and
$-c$.  Otherwise one probe returns a noncollinear point.  In the contained
case, the opposing coordinate probes select the endpoints of a nontrivial
segment.

Suppose now that $P$ is two-dimensional.  If $\gamma_t(e)>0$, then
\[
 n_e^\top x(n_e)=h_P(n_e)>n_e^\top v_e,
\]
so $x(n_e)\notin H_t$.  Endpoint selection makes $x(n_e)$ a vertex of $P$.
Thus every expansion adds a previously absent vertex.  Since $P$ has finitely
many vertices and each intermediate hull has finitely many edges, only
finitely many support queries are made.

At termination every edge gap is zero, and therefore
\[
 H_t\subseteq P
 \subseteq\bigcap_{e\in\mathcal E(H_t)}
 \{x:n_e^\top x\leq h_P(n_e)\}
 =\bigcap_{e\in\mathcal E(H_t)}
 \{x:n_e^\top x\leq n_e^\top v_e\}
 =H_t.
\]
Hence $H_t=P$.
\end{proof}

\subsubsection{Proof of Proposition~\ref{prop:grid}}

\begin{proof}
Choose $0<\beta<1-\alpha$ and, for sufficiently small $\varepsilon>0$, set
\[
  \Q_\pm=(1-\alpha-\beta)\delta_0
  +\alpha\delta_{K\pm\varepsilon}+\beta\delta_{x_\pm},
  \qquad
  x_\pm=\frac{1-\alpha(K\pm\varepsilon)}{\beta}.
\]
Because $K<1$, these are nonnegative unit-mass, unit-mean distributions with
$x_\pm>K$ for all sufficiently small $\varepsilon$.  Their probabilities of
$\{X\leq K\}$ differ by $\alpha$.

Couple the threshold atoms and the high atoms.  Their transport costs are
$2\alpha\varepsilon$ and
\[
 \beta|x_+-x_-|=2\alpha\varepsilon,
\]
so $W_1(\Q_+,\Q_-)\leq4\alpha\varepsilon$.  Every call payoff is
1-Lipschitz, so each call-price difference, and therefore each coordinate
width of the interval hull of the two call vectors, is at most
$4\alpha\varepsilon$.  The vanilla vectors converge while the threshold
probabilities remain $\alpha$ apart, yielding the stated grid nonuniformity.
\end{proof}

\subsection{Computational method and validation}
\label{app:algorithm}

\subsubsection{Verification of the strict interior condition}
\label{app:interior}

Let $\mathcal F_K^0$ collect the structural cell and threshold restrictions
\eqref{eq:massmoment}--\eqref{eq:tailcell} and
\eqref{eq:atom1}--\eqref{eq:atom2} before the quote inequalities.  Solve
\[
\begin{aligned}
 \eta^\star
 &=
 \max_{z,m,a_K,\eta}\quad \eta\\
 \text{subject to}\quad
 &(z,m,a_K)\in\mathcal F_K^0,\qquad 0\leq\eta\leq\tfrac12,\\
 &\underline c_i+\eta(\overline c_i-\underline c_i)
 \leq c_i(z,m),\qquad i\in\mathcal I,\\
 &c_i(z,m)
 \leq\overline c_i-\eta(\overline c_i-\underline c_i),
 \qquad i\in\mathcal I.
\end{aligned}
\]
Given positive widths for all retained quote intervals,
Assumption~\ref{ass:interior} is equivalent to $\eta^\star>0$.  The
numerical acceptance threshold is stated below.

\subsubsection{Relaxed system and common budget}
\label{app:relax}

For inconsistent cross sections, introduce interval slacks $\xi_i^-,\xi_i^+\geq0$
and replace each quote constraint by
\[
  \underline c_i-\xi_i^-
  \leq c_i(z,m)
  \leq \overline c_i+\xi_i^+.
\]
Call $(z,m,a_K,\xi^-,\xi^+)$ feasible for the relaxed system if $(z,m,a_K)$
satisfies \eqref{eq:massmoment}--\eqref{eq:tailcell} and
\eqref{eq:atom1}--\eqref{eq:atom2} (with \eqref{eq:atom2} omitted when
$j_K=J$), the slacks are nonnegative, and the displayed inequalities hold
for every $i\in\mathcal I$.

Weights are set on the complete put wing reference set $\mathcal I^C$.  If it
has $N_C$ distinct strikes and $n_C(k_i)$ observations at strike $k_i$,
\[
  w_i=\frac{1}{N_C\,n_C(k_i)},\qquad
  \sum_{i\in\mathcal I^C}w_i=1,
\]
so each distinct strike receives equal total weight, divided equally across
observations at that strike.  Each nested set inherits the weights on its
retained constraints, which avoids renormalizing common constraints.  The
minimum weighted loss is
\[
\begin{aligned}
  \delta^\star(\mathcal I)
  &=
  \min_{z,m,a_K,\xi^-,\xi^+}
  \quad \sum_{i\in\mathcal I}w_i(\xi_i^-+\xi_i^+)\\
  \text{subject to}\quad
  &(z,m,a_K,\xi^-,\xi^+)
  \text{ is feasible for the relaxed system}.
\end{aligned}
\]
Under the prespecified common budget $\bar\delta\geq0$, the relaxed region
is
\[
  \cR_{K,\bar\delta}(\mathcal I)
  =
  \left\{
    G_K(z,m,a_K)
    \ \middle|\
    \begin{aligned}
      &\exists\,\xi^-,\xi^+\ \text{such that}\\
      &(z,m,a_K,\xi^-,\xi^+)
      \text{ is feasible for the relaxed system,}\\
      &\sum_{i\in\mathcal I}w_i(\xi_i^-+\xi_i^+)
      \leq\bar\delta
    \end{aligned}
  \right\}.
\]
The region is an outer relaxation.  It contains every pair generated by a
distribution whose weighted quote loss is within the budget, while
boundary points of the finite system need not be generated by such a
distribution.  Because every nested quote set within a cross section is compared under the same
$\bar\delta$, the resulting regions remain nested.  The empirical choice of
$\bar\delta$ is described in Appendix~\ref{app:empirical}.

\subsubsection{Adaptive hull algorithm}

After constructing the applicable finite feasible set and checking
Assumption~\ref{ass:interior} when required, the reconstruction uses the
following support-function oracle.

\begin{description}[style=nextline,leftmargin=0pt,labelindent=0pt]
\item[Setup.]
A feasible polytope $\cY$, the projection $G_K$, and the marginal box
\[
  M_K=[\underline p,\overline p]\times[\underline L,\overline L],
\]
whose endpoints come from the four coordinate support bounds.  The initial
direction set is
\[
 D_0=
 \left\{
   \left(\cos\frac{\pi j}{4},\sin\frac{\pi j}{4}\right):
   j=0,\ldots,7
 \right\}.
\]
The numerical parameters are the degenerate-width tolerance
$\varepsilon_{\rm deg}$, point tolerance $\varepsilon_{\rm pt}$, edge tolerance
$\varepsilon_{\rm edge}$, area tolerances
$(\varepsilon_{\rm area}^{\rm abs},\varepsilon_{\rm area}^{\rm rel})$, and
the maximum number of support evaluations $N_{\max}$.  If either marginal
range is at most $\varepsilon_{\rm deg}$, the image is classified as
numerically degenerate and represented by a point or line segment.  Otherwise,
each coordinate is mapped to its marginal range:
\[
 \widetilde p=\frac{p-\underline p}{\overline p-\underline p},
 \qquad
 \widetilde L=\frac{L-\underline L}{\overline L-\underline L}.
\]
Thus the transformed marginal box is $[0,1]^2$.  In the remainder of this
description, $x$, $V_t$, $H_t$, and $O_t$ are in these normalized coordinates.
A normalized direction $u=(u_p,u_L)$ corresponds to the original support
direction
\[
 a(u)
 =
 \left(
 \frac{u_p}{\overline p-\underline p},
 \frac{u_L}{\overline L-\underline L}
 \right),
\]
and all inner products with the raw image $G_Ky$ use this direction.
The support program yields
\[
 \bigl(x(u),\overline h(u)\bigr),
\]
where $x(u)$ is the normalized image of $G_Ky_u^\star$ and $y_u^\star$ is
selected by a secondary program: for nonzero $u$ and $v=(-u_L,u_p)$,
maximize $a(v)^\top G_Ky$ over $y\in\cY$ subject to
$a(u)^\top G_Ky=\overline h_{\rm raw}(u)$, where
$\overline h_{\rm raw}(u)$ is the value of the support program
\eqref{eq:support} along $a(u)$ in the original coordinates.  The
secondary program leaves the support value unchanged and, in exact
arithmetic, returns an endpoint of the projected exposed face.  The
primal--dual accuracy check uses the primary solve.  The normalized
bound is
\[
  \overline h(u)
  =
  \overline h_{\rm raw}(u)
  -\frac{u_p\underline p}{\overline p-\underline p}
  -\frac{u_L\underline L}{\overline L-\underline L}.
\]
The primal--dual gap provides a numerical accuracy check.
In exact arithmetic, $\overline h(u)=\max u^\top x$ over the normalized image
of $\cY$, and the gap is zero.

\item[Initialization.]
Evaluate the support program for each $u\in D_0$, set
$V_0=\{x(u):u\in D_0\}$ after identifying points within
$\varepsilon_{\rm pt}$, and let
$d=\dim\operatorname{aff}(V_0)$.  If $d=0$, return the identified point.
If $d=1$, let $n\in\R^2$ be a unit normal to the candidate line and solve in
directions $n$ and $-n$.  Add the two returned support points and recompute the
affine dimension using $\varepsilon_{\rm pt}$.  If it remains one, return the
two extreme endpoints.  Otherwise continue with $d=2$.

\item[Refinement.]
At iteration $t$, form $H_t=\operatorname{conv}(V_t)$.  For every edge
$e\in\mathcal E(H_t)$ with outward unit normal $n_e$ and endpoint $v_e$, solve
in direction $n_e$ and compute the numerical edge gap
\[
 \widehat\gamma_t(e)
 =
 \max\{0,\overline h(n_e)-n_e^\top v_e\},
 \qquad
 \Gamma_t=\max_{e\in\mathcal E(H_t)}\widehat\gamma_t(e).
\]
Add every $x(n_e)$ satisfying
$\widehat\gamma_t(e)>\varepsilon_{\rm edge}$ that is distinct within
$\varepsilon_{\rm pt}$, and recompute the hull.  To preserve numerical
nesting, replace each collected upper bound by
\[
 \overline h_t^{\,\rm cons}(u)
 =
 \max\left\{\overline h(u),\max_{v\in H_t}u^\top v\right\}.
\]
Using all direction--upper-bound pairs collected through iteration $t$, form
\[
 O_t
 =
 \operatorname{conv}\!\left(
 \left[
 [0,1]^2\cap
 \bigcap_{\text{collected }(u,\overline h)}
 \{x:u^\top x\leq\overline h_t^{\,\rm cons}(u)\}
 \right]
 \cup H_t
 \right).
\]
The conservative upper bounds and final convexification enlarge the numerical
outer approximation and enforce inner--outer nesting.

\item[Convergence.]
The edge criterion is met when
$\Gamma_t\leq\varepsilon_{\rm edge}$.  Under relaxation with the common
budget, refinement may also end when
\[
 \mathcal A_{p_0}(O_t)-\mathcal A_{p_0}(H_t)
 \leq
 \varepsilon_{\rm area}^{\rm abs}
 +\varepsilon_{\rm area}^{\rm rel}\mathcal A_{p_0}(O_t).
\]
Here the area functional is evaluated after mapping $H_t$ and $O_t$ back to
the original $(p,L)$ coordinates.
If the limit on support evaluations is reached first, or if a sweep with a
positive gap adds no distinct point, the procedure ends without asserting the
edge criterion.  Otherwise set $V_{t+1}$ to the expanded support set and
repeat.

\item[Transformation.]
For a nondegenerate return, map $H_t$ and $O_t$ back to the original $(p,L)$
coordinates as $H$ and $O$.  Map their edges by
$\phi(p,L)=(p,L/p)$ and evaluate \eqref{eq:area}.
\end{description}

Primal and dual feasibility tolerances are $10^{-10}$.
The strict-interior cutoff is $10^{-8}$, and the degenerate-width, point, and
normalized edge tolerances are $10^{-10}$, $10^{-9}$, and $10^{-8}$,
respectively.  Relaxed cases may also stop when the absolute area gap is below
$10^{-12}$ plus 0.2\% of outer area.  The edge criterion alone establishes
the stated edge tolerance.

\subsubsection{Dual support program}
\label{app:dual}

Stack the mass and first-moment normalizations as $Ey=\mathbf q_E$ with
$\mathbf q_E=(1,1)^\top$, and collect every remaining linear inequality,
apart from nonnegativity, as $Ay\leq\mathbf q_A$.  Under the original
intervals these
are the cell-moment, threshold-atom, and bid--ask quote inequalities.  Under
relaxation, the quote rows carry slacks and the common budget adds one row.
Let $G$ be the two-row matrix that implements $G_K$ on $y$, with zero
columns for any slack variables.  For direction $u$, the dual of
\eqref{eq:support} is
\[
\begin{aligned}
  \min_{\lambda,\nu}\quad
  &\mathbf q_A^\top\lambda+\mathbf q_E^\top\nu\\
  \text{subject to}\quad
  &A^\top\lambda+E^\top\nu\geq G^\top u,\\
  &\lambda\geq0,\qquad \nu\ \text{free}.
\end{aligned}
\]
Quote, cell, atom, and, when present, common budget constraints contribute
distinct blocks of $\lambda$.

\subsubsection{Transformed area calculation}

For a polygonal set $\cC$, both section boundaries are affine between
successive vertex probabilities.  If
$L_{\cC}^\pm(p)=\alpha_\pm+\beta_\pm p$ on $[p_1,p_2]$, that interval
contributes
\[
  (\alpha_+-\alpha_-)\log\!\frac{p_2}{p_1}
  +(\beta_+-\beta_-)(p_2-p_1)
\]
to \eqref{eq:area}.  Summing these contributions evaluates the transformed
area without numerical quadrature.

\subsubsection{Synthetic validation}

Five zero-width synthetic designs are evaluated at both $K=0.85$ and
$K=0.90$.  For each
$x\in\{K,K-10^{-6},K+10^{-6}\}$, one design assigns weights
$(0.15,0.70,0.15)$ to $(x,1,2-x)$.  One design places a small weight far
in the tail, assigning
$(0.40,0.59,0.01)$ to $(0.9,1,5)$, and the bimodal design assigns equal
weights to $(0.82,1.18)$.  Each distribution has mean one.
Across all ten combinations of a design and a threshold, the LP remains feasible
when both known coordinates are imposed and the known $(p_K,L_K)$ lies in the
numerical outer envelope.  Point and line projections are classified
as degenerate.  The zero-width designs validate containment and numerical
accuracy, while the positive-width empirical cases use
Assumption~\ref{ass:interior}.

\subsection{Supplementary empirical evidence}
\label{app:empirical}

\subsubsection{Sampling details}

Within each calendar week, Wednesday is preferred.  If it is absent or
ineligible, the fallback order is Tuesday, Monday, Thursday, Friday.  Ties
between eligible expiries are resolved in favor of the earlier expiry.  For
the put pair, if both selections coincide with a quote exactly at $K$,
the pair retains that strike and the next-nearest distinct strike by
absolute distance.

\subsubsection{Qualification and numerical verification}

Before contract-identity screening, 2,076 feasible cases satisfy the strict
interior cutoff, with a smallest relative margin of 0.02.  The remaining 16
feasible cases have zero maximum margin, below the numerical cutoff of
$10^{-8}$.

The common relaxation budget is the larger of the 95th percentile of
$\delta^\star$ in the calibration sample (zero) and the prespecified floor
based on normalized quote tick size.  This gives
$\bar\delta=5\times10^{-5}$, fixed before analysis of the main sample.  Every
nondegenerate case under the original constraints satisfies the normalized
edge criterion.  Replacing midpoint areas by the endpoints of their numerical
envelopes leaves the reported medians unchanged at displayed precision.

\subsubsection{Subsample stability and robustness}

\begin{table}[H]
\centering
\caption{Primary-sample stability of area ratios under the complete put wing}
\label{tab:subsample}
\begin{tabular}{lrr}
\toprule
Subsample & Nondegenerate cases & Median area ratio\\
\midrule
All primary cases & 1,966 & 0.6340\\
Excluding calibration dates & 1,946 & 0.6338\\
2013--2015 & 578 & 0.6233\\
2016--2018 & 542 & 0.6336\\
2019--2021 & 546 & 0.6432\\
2022--August 2023 & 300 & 0.6367\\
\bottomrule
\end{tabular}
\begin{minipage}{0.90\textwidth}
\footnotesize\emph{Notes:} All rows use the primary sample definition.  Case
counts report observations entering the area-ratio median.  Calibration dates
are those used to set the common budget before analysis of the main sample.
Lower ratios indicate tighter joint sets relative to the marginal benchmark.
Similarity across rows indicates stability over time and sample composition.
\end{minipage}
\end{table}

\begin{table}[H]
\centering
\caption{Range of median area ratios across horizon and threshold cells}
\label{tab:robustness}
\begin{tabular}{lrr}
\toprule
Specification & Minimum median & Maximum median\\
\midrule
Forward $-20$ bp & 0.6200 & 0.6436\\
Forward $+20$ bp & 0.6225 & 0.6458\\
Doubled bid--ask spread & 0.6053 & 0.6458\\
50\% strike thinning & 0.6162 & 0.6378\\
Narrowest quoted interval per strike & 0.6237 & 0.6462\\
$p_0=10^{-5}$ & 0.6225 & 0.6441\\
$p_0=10^{-3}$ & 0.6373 & 0.6639\\
Long maturity & 0.6336 & 0.6527\\
Threshold $K=0.80$ & 0.6128 & 0.6316\\
Threshold $K=0.95$ & 0.6396 & 0.6556\\
Full call--put cross section & 0.6301 & 0.6926\\
\bottomrule
\end{tabular}
\begin{minipage}{0.90\textwidth}
\footnotesize\emph{Notes:} These sensitivity summaries pool feasible and
relaxed cases after contract screening.  Strike thinning retains every other distinct
strike after sorting from lowest to highest.  The long-maturity check uses an
eligible expiry in 61--120 days closest to 90 days.   Each ratio compares a specification with its own
marginal benchmark.  The rows therefore assess stability rather than rank
specifications by performance.
\end{minipage}
\end{table}

\subsubsection{Price-grid benchmark}

For support cap $x_{\max}\in\{2,3\}$ and $N_{\rm grid}$ nodes, the base grid is
$x_j=jx_{\max}/(N_{\rm grid}-1)$ for
$j=0,\ldots,N_{\rm grid}-1$.  The inserted-node design replaces the
nearest interior node by $K$ and reorders the grid.  An equal-distance tie is
resolved in favor of the smaller node.  The centered-cell design uses
interior nodes
$K+(\ell+\tfrac12)x_{\max}/(N_{\rm grid}-1)$ for $\ell\in\mathbb Z$, retaining the
$N_{\rm grid}-2$ nearest points in $(0,x_{\max})$ and adding $0,x_{\max}$.

Let $c$ index a case and $g$ a grid specification.  Let
$\delta^\star_{{\rm grid},g,c}$ be the minimum weighted quote loss for the
grid formulation, and let $\delta^\star_{{\rm cont},c}$ be the
continuous-state minimum computed before imposing the support cap.  The
shared budget is
\begin{equation}
  \delta_{g,c}
  =
  \max\{\delta^\star_{{\rm grid},g,c},
        \delta^\star_{{\rm cont},c}\}
  +10^{-10}.
  \label{eq:gridbudget}
\end{equation}
After the shared budget in \eqref{eq:gridbudget} is chosen from the grid and
uncapped continuous minima, the matched continuous support program imposes
$0\leq X\leq x_{\max}$, equivalently $m_J\leq x_{\max}z_J$ in the terminal
cell.

Directions are
\[
 D_{64}=
 \left\{
   \left(\cos\frac{\pi\ell}{32},\sin\frac{\pi\ell}{32}\right):
   \ell=0,\ldots,63
 \right\}.
\]
For case $c$, grid $g$, and $u\in D_{64}$, let $h_{g,c}(u)$ and
$h_{\mathrm{cont},c}(u)$ be the support values from the grid and matched
continuous-state programs, respectively, both under the shared budget
$\delta_{g,c}$.  The reported absolute error is
\[
\begin{aligned}
  e_{g,c}(u)
  &=|h_{g,c}(u)-h_{\mathrm{cont},c}(u)|,\\
  e^{\max}_{g,c}
  &=\max_{u\in D_{64}}e_{g,c}(u).
\end{aligned}
\]
For each case, the median over directions is
$\operatorname{med}_{u\in D_{64}} e_{g,c}(u)$.  Cross sections and tables report its
median across cases.  The tails of the case-level distribution summarize
$e^{\max}_{g,c}$.  The four benchmark synthetic cases at $K=0.90$ use the
three threshold locations and the bimodal law defined in the synthetic
validation subsection.

\begin{table}[H]
\centering
\caption{Selected price-grid errors and computation times}
\label{tab:grid}
\begin{tabular}{rrlrrrr}
\toprule
Nodes & Cap & Threshold placement & Median error & Maximum error &
Grid budget & Seconds\\
\midrule
51  & 2 & Centered cell & 0.01136 & 0.10134 & $1.63\times10^{-4}$ & 0.138\\
51  & 2 & Inserted node & 0.00956 & 0.15000 & $1.22\times10^{-4}$ & 0.152\\
201 & 2 & Centered cell & 0.00122 & 0.07739 & $1.0\times10^{-10}$ & 0.419\\
201 & 2 & Inserted node & $1.52\times10^{-10}$ & 0.15000 & $1.0\times10^{-10}$ & 0.382\\
401 & 2 & Centered cell & 0.00026 & 0.07597 & $1.0\times10^{-10}$ & 0.781\\
401 & 2 & Inserted node & $<10^{-14}$ & 0.14999 & $1.0\times10^{-10}$ & 0.808\\
\bottomrule
\end{tabular}
\begin{minipage}{0.94\textwidth}
\footnotesize\emph{Notes:} Grid budget is the median additional weighted
quote loss beyond the uncapped continuous-state minimum.  Error
therefore includes this disclosed feasibility adjustment, especially at 51
nodes.  Lower errors indicate closer agreement with the continuous-state
solution, and a lower grid budget indicates less additional quote relaxation.
Maximum error is the largest $e^{\max}_{g,c}$ across cases.  Seconds report
median computation time for each grid specification.
\end{minipage}
\end{table}

\bibliographystyle{plainnat}
\bibliography{references}

\end{document}